\title{Localisation without supersymmetry: towards exact results from Dirac structures in 3D $N=0$ gauge theory}
\date{\today}
\author{Alex S.~Arvanitakis\thanks{Email: \texttt{alex.s.arvanitakis@vub.be}},\,  Dimitri Kanakaris\thanks{Email: \texttt{dimitri.kanakaris.decavel@outlook.com}}}
\affil{\small
\textit{Theoretische Natuurkunde, Vrije Universiteit Brussel,
and the International Solvay Institutes,
Pleinlaan 2, B-1050 Brussels, Belgium}
}
\author{DKD }
\newcommand{\trd}{Third Way}
\newcommand{\bbA}{\mathbb A}
\newcommand{\tfg}{\tilde{\mathfrak g}}
\newcommand{\fd}{\mathfrak{d}}
\newcommand{\tA}{\tilde A}
\newcommand{\tT}{\tilde T}
\newcommand{\tf}{\tilde f}
\newcommand{\bbg}{\mathbbm g}
\newcommand{\extd}{\mathrm{d}}
\newcommand{\del}{\partial}
\newcommand{\bbD}{\mathbb D}
\newcommand{\bbF}{\mathbb F}
\newcommand{\diag}{\operatorname{diag}}
\newcommand{\CS}{{\text{CS}}}
\newcommand{\BRST}{{\text{BRST}}}
\newcommand{\loc}{{\text{loc}}}
\newcommand{\Or}{\mathcal{O}}
\newcommand{\gD}{\mathcal{D}}
\newcommand{\gsD}{\slashed{\gD}}
\newcommand{\gE}{\mathcal{E}}
\newcommand{\gM}{\mathcal{M}}
\newcommand{\gR}{\mathcal{R}}
\newcommand{\bc}{\bar{c}}
\newcommand{\hlambda}{\hat\lambda}
\newcommand{\hzeta}{{\hat\zeta}}
\newcommand{\hdelta}{\hat\delta}
\newcommand{\sF}{\slashed F}
\newcommand{\bbsF}{\slashed\bbF}
\newcommand{\bbhlambda}{\hat\bblambda}
\newcommand{\bbsA}{\slashed\bbA}
\newcommand{\Cj}{\mathcal{C}}
\newcommand{\zetah}{\hat\zeta}
\newcommand{\nablas}{\slashed{\nabla}}
\newcommand{\nablasl}{\overset{\leftarrow}{\nablas}}
\newcommand{\Dg}{\mathcal{D}}
\newcommand{\Dgs}{\slashed{\Dg}}
\newcommand{\pder}[2]{{\frac{\del #1}{\del #2}}}
\newcommand{\Ber}{\operatorname{Ber}}
\newcommand{\Hess}{\operatorname{Hess}}
\newcommand{\Pf}{\operatorname{Pf}}
\newcommand{\bos}{{\text{bos}}}
\newcommand{\idop}{{\mathbbm{1}}}
\definecolor{britishracinggreen}{rgb}{0.0, 0.26, 0.15}
\definecolor{lemonchiffon}{rgb}{1.0, 0.98, 0.8}
\definecolor{mediumseagreen}{rgb}{0.24, 0.7, 0.44}
\definecolor{parisgreen}{rgb}{0.31, 0.78, 0.47}
\begin{document}
\titlepage
\maketitle
\thispagestyle{empty}
\begin{abstract}
We show, by introducing purely auxiliary gluinos and scalars, that the quantum path integral for a class of 3D interacting non-supersymmetric gauge theories localises. The theories in this class all admit a `Manin gauge theory' formulation, that we introduce; it is obtained by enhancing the gauge algebra of the theory to a Dirac structure inside a Manin pair. This machinery allows us to do  localisation computations for every theory in this class at once, including for 3D Yang-Mills theory, and for its Third Way deformation; the latter calculation casts the Third Way path integral into an almost 1-loop exact form.

\end{abstract}
\pagebreak 
\thispagestyle{empty}
\tableofcontents

\clearpage
\setcounter{page}{1}

\section{Introduction}
Our elders sometimes advise, ``one idea per paper''.  We will disappoint our elders, for this paper contains two ideas:
\begin{enumerate}
    \item {\bf Manin theory:} a new formulation of three-dimensional Yang-Mills-esque gauge theories, which encodes distinct models with gauge algebra $\fg$ depending on how $\fg$ is chosen as a \emph{Dirac structure} inside a bigger Lie algebra $\fd$, called its \emph{double}; and
    \item {\bf Evanescent localisation:} the addition of purely auxiliary degrees of freedom --- auxiliary gluinos and scalar fields for this paper --- such that the path integral that calculates certain expectation values (e.g.~the partition function)  reduces to an equivalent integral over a smaller space, such that the theory with the auxiliary degrees of freedom is completely equivalent to the original theory.
\end{enumerate}
What we do in this paper is combine ideas 1.~and 2.~to produce a localisation calculation for large classes of Manin gauge theories at once. The motivation for this is that these theories are \textbf{interacting, non-supersymmetric,  non-topological gauge theories} (i.e.~they possess local degrees of freedom) which are not known to be integrable or otherwise amenable to exact quantum calculations: for instance, this class of theories contains (a theory equivalent to) ordinary 3D $N=0$ Yang-Mills theory.

Although the machinery of Dirac structures, Manin pairs, and Lie quasi-bialgebras, that we will employ, might be obscure to some physicists, it has previously made numerous appearances in string theory, including: in Poisson-Lie T-duality \cite{Klimcik:1995ux,Klimcik:1995jn}, in 2D integrable \cite{Sfetsos:2015nya,Vicedo:2015pna} and topological \cite{Getzler:1993fs,Kotov:2004wz,Chatzistavrakidis:2022wdd} sigma models, in the realisation of symmetries on string and brane worldvolumes \cite{Alekseev:2004np,Arvanitakis:2021wkt}, and in the geometry of flux compactifications \cite{gualtieri2011generalized,Tennyson:2021qwl}, to give an incomplete list. It is therefore somewhat surprising that the same ideas are fruitful in the non-stringy context of the current paper.

The localisation technique we employ is the usual supersymmetric localisation for gauge theories, which was introduced by Pestun \cite{Pestun:2007rz} for 4D super Yang-Mills theory and was subsequently employed by Kapustin, Willett, and Yaakov \cite{kapustin2010exact} to calculate expectation values of supersymmetric Wilson loops in various 3D $N=2$ gauge theories; we largely follow the latter. (We also benefitted from the treatment in references \cite{Kallen:2011ny,Fan_2019}.) The $N=2$ `supersymmetry' we employ is realised via purely auxiliary gluinos and scalar fields, and leads to evanescent localisation in the above sense. To highlight this fact, and to distinguish between our `supersymmetry' and conventional supersymmetry, we will call the transformations we employ \textbf{evanescent supersymmetries}. The key point in our argument is the construction of $N=2$ evanescent-supersymmetric Manin gauge theory which is amenable to supersymmetric localisation techniques.

It is worth pointing out early that while the evanescent supersymmetries themselves are just deformed versions of garden-variety supersymmetries, their algebra is typically completely different, for the following reason: evanescent supersymmetries are necessarily vanishing on-shell (``trivial'' in the terminology of Henneaux and Teitelboim \cite{henneaux1992quantization}), and their anticommutators are necessarily trivial  (as we will see); however, spacetime translations are usually nonzero on-shell, hence evanescent SUSYs are not expected to square to translations, and indeed they do not for $N=2$ evanescent-SUSY Manin theory. An exception to this trend is pure $N=2$ Chern-Simons theory which is in fact evanescent-supersymmetric in our sense, as we review. 

Moreover, we establish various basic facts about Manin theory in order to demonstrate that the theory both passes basic consistency checks and can be localised. Among other results, we derive its hamiltonian formulation and demonstrate that energy is bounded below (section \ref{sec:hamiltonian}), we show that the theory enjoys parity invariance and reflection positivity (in Euclidean signature, section \ref{secParity}), and we  demonstrate that the theory can be supersymmetrised on curved compact backgrounds (section \ref{sec:SUSYcurved}).

Finally, just to assure the reader that we are not talking about the empty set, we take care to display examples of theories which admit a Manin theory description (section \ref{sec:examples}): these include Yang-Mills theory, a deformation thereof called the ``Third Way'' theory introduced a while ago by Sevrin, Townsend, and the first author \cite{Arvanitakis:2015oga}, and a few others. (In fact the localisation results in this paper generalise and expand the results derived for the Third Way theory in the second author's Master's thesis \cite{Decavel:2023rmi}.) We also point out that solutions to the modified classical Yang-Baxter equation, or to a generalisation thereof we introduce in this paper \eqref{eq:modifiedmodifiedCYBE}, \emph{always} give rise to Manin theories.

\section{Manin gauge theory}
Our theory --- which we will call ``Manin theory'' for brevity and to avoid self-aggrandisement\footnote{And also because ``Dirac theory'' and ``Dirac gauge theory'' are very much taken!} --- is defined by the following mass deformation of a Chern-Simons action:
\be
\label{eq:ManinLag}
S[\bbA]=\int_{M_3}\eta\bigg[k\Big(\frac{1}{2}\bbA\dr\bbA +\frac{1}{3}\bbA^3\Big) + \frac{1}{2}  g^2 \bbA\star M\bbA\bigg]\,,
\ee
The lagrangian is defined on any three-dimensional manifold $M_3$ with Hodge star $\star$. It is completely specified by the following data:
\begin{itemize}
    \item a dimensionless real constant $k$;
    \item  a constant $g^2$ with units of mass;
    \item a \emph{Manin pair} $(\frak d,\frak g,\eta)$, with the 1-form field $\bbA$ taking values in the Lie algebra $\fd$, and where $\eta$ is the invariant inner product for $\fd$, with $\fg\into \fd$ being the gauge algebra;
    \item and an operator $M:\fd\to \fd$  obeying identities \eqref{eq:Midentities} below.
\end{itemize} As we will see later, whenever the theory is equivalent to Yang-Mills, $g^2$ is proportional to the Yang-Mills coupling. We will also see that the operator $M$ is often completely specified by the Manin pair data. We now summarise what a Manin pair is and what the identities to be satisfied by $M$ are.

\paragraph{Lightning definition of \emph{Manin pairs}, \emph{Lie quasibialgebras,} and \emph{Dirac structures}.} These notions essentially contain the same information but from different perspectives. 
\begin{itemize}
    \item A Lie quasibialgebra is a real Lie algebra $\fg$ with commutation relations $[T_a,T_b]=f_{ab}{}^c T_c$  which is additionally endowed with  objects $\tf^{ab}{}_c$ and $\tilde h^{abc}$, such that $(f_{ab}{}^c,\tf^{ab}{}_c, \tilde h^{abc})$ define the structure constants of a $(2\dim\fg)$-dimensional Lie algebra $\frak d$ as follows:
\be
\label{eq:manindoublecommutationrelations}
[T_a,T_b]=f_{ab}{}^c T_c\,,\quad [\tT^a,\tT^b]= \tf^{ab}{}_c \tT^c+ \tilde h^{abc} T_c\,,\quad [T_a,\tT^b]=\tf^{bc}{}_a T_c -f_{ac}{}^b \tT^c\,,
\ee
where $T_a$ and $\tilde T^a$ collectively form a basis of $\fd$. The $\tf^{ab}{}_c$ and $\tilde h^{abc}$ are totally antisymmetric in their upper indices and must obey identities amongst themselves implied by the Jacobi identity for $\fd$. With these structure constants for $\fd$ we find that the split-signature inner product with nonvanishing matrix entries $\eta(T_a,\tT^b)=\delta^a_b$ is $\ad \fd$-invariant ($\eta([\mathbbm x,\mathbbm y],\mathbbm z)=\eta(\mathbbm x,[\mathbbm y,\mathbbm z])$ for $\mathbbm{x,y,z}\in\fd$).

$\fd$ is called the \emph{double} of the Lie quasibialgebra $\fg$ equipped with $\tilde f,\tilde h$ as above, following Bangoura and Kosmann-Schwarzbach \cite{bangoura1993double}.
\item A \emph{Manin pair} is defined to be the triple $(\fd,\fg,\eta)$ where $(\fd,\eta)$ is a Lie algebra with a split-signature invariant inner product $\eta$ and a  subalgebra $\fg\into\fd$ where $\eta$ vanishes: $\eta|_{\fg}=0$.

Upon choosing any complementary isotropic vector space $\tfg$ to $\fg$ in $\fd$, $\fg$ acquires the structure of a Lie quasibialgebra, with commutation relations as above. Therefore there are multiple Lie quasibialgebras corresponding to each Manin pair. We discuss the \emph{twist} equivalences between such Lie quasibialgebras later.

Note that $\tfg$ need not be a Lie algebra itself, i.e.~we need not have $\tilde h=0$. (In this last case, $\fg$ is called a \emph{Lie bialgebra}, and $(\fd,\fg,\tfg)$ form a \emph{Manin triple}.)

\item In the above scenarios, $\fg$ is a \emph{Dirac structure} for the specific double $\fd$; this is a maximal isotropic --- i.e.~lagrangian --- subalgebra of $\fd$, and we will also refer to $\fg$ as `the lagrangian subalgebra' of $\fd$ in this context. (The notion of Dirac structure was introduced in a more general context in \cite{courant1990dirac}.)
\end{itemize}

\paragraph{The $M$ operator.} This is a linear map $M:\frak d\to \frak d$ with
\begin{subequations}
\label{eq:Midentities}
\begin{alignat}{2}
M\frak{g}&=0\,;\\
\eta(M\mathbbm x,\mathbbm y)&=\eta(\mathbbm x,M\mathbbm y)\,;\\
\eta(M[x,\mathbbm y],\mathbbm z)+\eta(M\mathbbm y, [x,\mathbbm z])&=0\,,
\end{alignat}
\end{subequations}
where $\mathbbm{x,y,z}\in \fd$ and $x\in \fg$. In the basis $(T_a,\tT^a)$ with structure constants as in \eqref{eq:manindoublecommutationrelations} one can prove easily that
\be
MT_a=0\,,\qquad  M\tilde T^a=M^{ab}T_b\,,
\ee for a symmetric matrix $M^{ab}$. Moreover we will assume $M$ is \textbf{nondegenerate} in the sense that the form $\eta(M\bullet,\bullet)$ restricted to an isotropic complement, $\tfg$, of $\fg$ in $\fd=\fg+\tfg$ is nondegenerate; in terms of $M^{ab}$ this is the condition that $M^{ab}$ admits an inverse. We will denote that inverse by $M_{ab}$.

In the basis \eqref{eq:manindoublecommutationrelations} for $\fd$ we may split $\bbA\equiv A^a T_a+\tA_a \tT^a$ so the action reads
\be
\label{eq:ManinLagExplicit}
\begin{split}
S=\frac{1}{2}\int\bigg[ k\Big(\tilde A_a \dr A^a + A^a\dr \tilde A_a+ f_{bc}{}^a \tilde A_a A^b A^c +  \tilde f^{bc}{}_aA^a \tilde A_b\tilde A_c+\frac{1}{3} \tilde h^{abc}\tilde A_a\tilde A_b\tilde A_c \Big)\\
+g^2\tilde A_a \star M^{ab}\tilde A_b\bigg]\,.
\end{split}
\ee
Although this \emph{form} of the action depends on the choice of complement $\tfg$ of $\fg$ in $\fd=\fg+\tfg$, we emphasise that the action does not depend on $\tfg$, as is manifest from \eqref{eq:ManinLag} and \eqref{eq:Midentities}. In other words, the action only depends on the Manin pair, and not on the specific Lie quasibialgebra whose data appears in \eqref{eq:ManinLagExplicit}.

\paragraph{Gauge invariance.} The action \eqref{eq:ManinLag} (or equivalently \eqref{eq:ManinLagExplicit}) is invariant (up to boundary terms) under a subset of the gauge transformations of Chern-Simons theory with gauge algebra $\fd$. In fact it is gauge invariant under gauge transformations valued in the Lagrangian subalgebra $\fg$. Explicitly, those are
\be
\label{eq:infinitesimalgaugetransfs}
\delta \bbA= \dr \Lambda+[\bbA,\Lambda]\,,\qquad \Lambda\in \cin(M_3)\times \fg\,,
\ee
where the bracket is given in \eqref{eq:manindoublecommutationrelations}. The proof of gauge invariance is trivial and involves checking the invariance of the mass term  $\eta(\bbA \star M\bbA)$ using the identities \eqref{eq:Midentities}. One may similarly show gauge invariance under finite gauge transformations with gauge parameter $g:M_3\to G$, $G$ being the Lie group with Lie algebra $\fg$.

In terms of the split $\bbA=A+\tA$ into $\fg$- and $\tfg$-valued 1-forms as used to arrive at \eqref{eq:ManinLagExplicit}, the commutation relations \eqref{eq:manindoublecommutationrelations} show that $A$ transforms as a \textbf{gauge field} for $\fg$ while $\tA$ transforms as a \textbf{matter field}.

\paragraph{Remarks}\begin{enumerate}
    \item Any admissible $M$ (that obeys \eqref{eq:Midentities}) may be rescaled arbitrarily, which corresponds to adjusting the coupling $g^2$. Moreover, if $\fg$ is a simple Lie algebra then the form $\eta(M\bullet,\bullet)$ and $M$ are both  determined (via Schur's lemma) up to scale in terms of the inverse of the Killing form for $\fg$. Explicitly, we have 
    \be
    M^{ab}\propto \kappa^{ab}
    \ee
    if $\kappa$ denotes the Killing form on $\fg$.
    \item The algebra $\fg$ can be compact, however its double $\fd$ is typically not compact because $\eta$ has split signature. In particular $\fd$ is never compact when $\fd$ is simple. We will see examples of both.
    \item One might expect a quantisation condition for the dimensionless constant $k$ since it appears as a Chern-Simons level. Ignoring, even, the mass term that breaks some of the gauge symmetry, there is usually no quantisation condition: $\eta((\bbg\inv\dr\bbg)^3)$  can be globally exact, where $\bbg:M_3\to \mathbb D$ is a finite gauge parameter, and $\mathbb D$ is a Lie group with Lie algebra $\fd$. (It is indeed exact e.g.~if the brackets \eqref{eq:manindoublecommutationrelations} close on $\tfg$ so it is a Lie algebra while at the same time $\mathbb D$ is diffeomorphic to $G\times \tilde G$, $\tilde G$ being a Lie group whose Lie algebra is $\tfg$, see e.g.~\cite[section 3.2]{Arvanitakis:2021lwo}.)
\end{enumerate}

\subsection{Examples}
\label{sec:examples}
Before studying the class of Manin theories as a whole, we demonstrate that most known 3D $N=0$ gauge theories belong in this class, and point out examples of Manin pairs that give new 3D gauge theories.

Since one needs to specify not just the gauge algebra $\fg$ but the Lie group $G$ in order to fully determine a gauge theory, we will be denoting the Manin pairs $(\fd,\fg)$ via their corresponding Lie groups $(\mathbb D,G)$, where the gauge group $G$ is a maximally isotropic (with respect to $\eta$) subgroup of $\mathbb D$. 

\subsubsection{Yang-Mills as a Manin theory}\label{sec:YM}
\label{Yang-Mills as a Manin theory}
Here we take $\mathbb D=T^\star G$, the cotangent bundle of the gauge group. This is in fact a Lie group; it is the semidirect product $G\ltimes \mathfrak g^\star$ of $G$ with its coadjoint representation. If $T_a$ denote generators of $\mathfrak{g}$ and $\tilde T^a$ those of $\mathfrak{g}^\star\equiv \tfg$, the Lie algebra $\fd$ of $\mathbb D=T^\star G$ takes the form \eqref{eq:manindoublecommutationrelations} with $\tilde f=\tilde h=0$. Explicitly,
    \be
    \label{eq:YMdoublealgebra}
    [T_a,T_b]=f_{ab}{}^c T_c\,,\qquad [\tilde T^a,T_b]=f_{bc}{}^a \tilde T^c\,,\qquad [\tilde T^a,\tilde T^b]=0\,,
    \ee
so that the action (after an integration by parts) reads
\be
\label{first order YM}
\begin{split}
S=\int\bigg[ k\Big(\tilde A_a (\dr A^a + \tfrac{1}{2}f_{bc}{}^a A^b A^c) \Big)+\frac{1}{2}g^2\tilde A_a \star M^{ab}\tilde A_b\bigg]\,.
\end{split}
\ee
For $G$ a compact simple group, as we remarked above, $M^{ab}$ must be proportional to the inverse Killing form $\kappa^{ab}$.  We then recognise the above as the first-order formulation of Yang-Mills theory. Indeed upon elimination of $\tilde A$ we obtain
\be
S=-\frac{k^2}{2g^2}\int M_{ab} F^a \star F^b\qquad (F\equiv \dr A+A^2)\,,
\ee
which is the Yang-Mills action with coupling proportional to $g^2/k^2$.

We observe here that $k$ clearly need not obey a quantisation condition in this case.

\subsubsection{Freedman-Townsend theory in 3D}
\label{sec:freedmantownsend}
For the above example, since $\tilde h=0$ we notice that $\fg$ is in fact a Lie bialgebra. Therefore we can swap the roles of $\fg$ and its complement $\fg^\star\equiv \tfg$ and consider the Manin pair of $\fd=\fg\ltimes \fg^\star$ and $\fg^\star$ (which we interpret as the abelian Lie algebra of dimension $\dim \fg$). To do this we need to replace $M$ by a new operator $\tilde M$ appropriate to this Manin pair; in particular, due to \eqref{eq:Midentities}, it must annihilate $\fg^\star$. Its nonvanishing matrix elements will therefore be $\tilde M T_a=\tilde M_{ab}\tilde T^b$.

Therefore the action reads (after integration by parts)
\be
\begin{split}
S=\int\bigg[ k\Big(\tilde A_a (\dr A^a + \tfrac{1}{2}f_{bc}{}^a A^b A^c) \Big)+\frac{1}{2}g^2 A^a \star \tilde M_{ab} A^b\bigg]\,.
\end{split}
\ee
We recognise this as the Freedman-Townsend gauge theory action, or rather a three-dimensional version thereof \cite[formula (2.9)]{Freedman:1980us} (at least when $\tilde M$ is proportional to the Killing form).

The gauge transformations \eqref{eq:infinitesimalgaugetransfs} indeed reproduce the ones of Freedman-Townsend theory: we need to replace the gauge parameter $\Lambda=\Lambda^a T_a$ that takes values in $\fg$ with a gauge parameter $\tilde \Lambda=\tilde \Lambda_a{\tT}^a$ that takes values in $\fg^\star$, which gives
\be
\delta A=0\,,\qquad \delta \tilde A=\dr \tilde \Lambda +[A,\tilde \Lambda]\,,
\ee
where on the right-hand side of the last formula we recognise the covariant derivative with respect to $A$ of $\tilde A$, seen as a matter field in the coadjoint representation.

\subsubsection{The Third Way theory}
\label{sec:3rdway}
We now select $\bbD = G\times G$ (or, equally well, any $\mathbb D$ whose Lie algebra is a direct sum $\fg\oplus \fg$).  This will give us the Third Way theory, introduced in \cite{Arvanitakis:2015oga}. (We refer to the same reference for a more detailed discussion as well as to references \cite{Deger:2022znj,Deger:2021fvv}.) The Third Way theory describes a sector of ABJM theory (see e.g. \cite{Mukhi_2011,Nilsson_2014}), which has a $G\times G \to (G\times G)_{\diag}\cong G$ broken gauge symmetry, where by $(G\times G)_{\diag}$ we denote the diagonal subgroup. This is given a Manin pair structure as follows: We denote the generators of the Lie algebra $\fd = \fg\oplus\fg$ as $T_a^\pm$ satisfying
\begin{align}
    [T_a^\pm,T_b^\pm] &= f_{ab}{^c}T_c^\pm
    &
    \eta(T_a^\pm,T_b^\pm) &= \pm\kappa_{ab}
    \\
    [T_a^\pm,T_b^\mp] &= 0
    &
    \eta(T_a^\pm,T_b^\mp) &= 0
\end{align}
where $\kappa$ an invariant form on $\fg$. With this, the diagonal subalgebra $(\fg\oplus\fg)_{\diag} \cong \fg$ is isotropic (maximally so, in fact, for dimension-counting reasons). The Manin pair is thus given by $(\fg_\text{diag}\hookrightarrow\fg\oplus \fg,\eta)$. 

To write out the action explicitly we choose the antidiagonal subspace for $\tfg$, as it is also isotropic. Accordingly, $\fg_{\diag}$ and $\tfg$ have respective bases
\begin{align}
    T_a &= T_a^+ + T_a^-
    &
    \tT^a_k &= \frac{1}{2}k^{-1}\kappa^{ab}\big(T_b^+ - T_b^-\big)
\end{align}
satisfying relations
\begin{align}
    [T_a,T_b] &= f_{ab}{^c}T_c
    &
    \eta(T_a,T_b) &= 0
    \vphantom{\frac{1}{4}}\\
    [\tT^a_k,T_b] &= f_{bc}{^a}\tT^c
    &
    \eta(\tT^a_k,T_b) &= k^{-1}\delta^a{_b}
    \vphantom{\frac{1}{4}}\\
    [\tT^a_k,\tT^b_k] &= \frac{1}{4}k^{-2}f^{abc}T_c
    &
    \eta(\tT^a_k,\tT^b_k) &= 0
\end{align}
where indices on the $\fg$-structure constants $f$ are raised/lowered using $\kappa$. 

The reason for introducing $k$ into our basis is the observation that in the limit $k\to\infty$, the algebra of $G\times G$ reduces to that of $T^\star G$, which is mathematically analogous to how  Minkowski spacetime is the large radius limit of anti-de Sitter spacetime. This limit is in agreement with the observation that the Third Way theory is a continuous deformation of Yang-Mills theory, as taking $\bbD = T^\star G$ corresponds to the first order formulation of Yang-Mills theory we described in section \ref{Yang-Mills as a Manin theory}.

We now define the mass matrix $M$ by
\begin{align}
    MT_a &= 0\,,
    &
    MT^+_a &= +2kT_a\,,
    \\
    M\tT^a &= k\kappa^{ab}T_b\,,
    &
    MT^-_a &= -2kT_a\,.
\end{align}
Having specified the geometric data, we can now formulate the theory. To this end we expand the connection 1-form as 
\begin{equation}
    \bbA = A^{+a}T_a^+ + A^{-a}T_a^- = A^aT_a + \tA_a\tT^a_k
\end{equation}
where under $\fg_{\diag}\cong\fg$ we see that $A^\pm$ and $A$ are $\fg$-connections and $\tA$ is an auxiliary matter field in the adjoint $\fg$-representation. With this, action \ref{eq:ManinLag} becomes
\begin{equation}
\label{eq:thirdwayLag}
\begin{split}
    S[\bbA]
    &= kS_\CS[A^+] - kS_\CS[A^-] + kg^2\int\kappa_{ab}(A^+-A^-)^a\star(A^+-A^-)^b
    \\
    &= \int\bigg[\tA_aF^a + \frac{1}{24k^2}f^{abc}\tA_a\tA_b\tA_c + \frac{1}{2}g^2\kappa^{ab}\tA_a\star\tA_b\bigg].
\end{split}
\end{equation}
where $F = \extd A + A^2$. We see that in the $k\to\infty$ limit, this theory reduces to the first order formulation of Yang-Mills theory with coupling $g^2$, as formulated by action \ref{first order YM}.

A more direct way to see this is by solving for $\tA$ and back-substituting it into the action \cite{Mukhi_2011}. Its field equations are given and recursively solved by
\begin{gather}
    F + \frac{1}{4k^2}\tA^2 + g^2\star M\tA = 0
    \\
    \Rightarrow \tA_a = -\frac{1}{g^2}\star F_a - \frac{1}{8k^2g^6}f_{abc}\star({\star} F^b{\star} F^c) + \Or(k^{-4})
\end{gather}
Back-substitution into the action now yields
\begin{equation}
    S[\bbA] = \int\bigg[-\frac{1}{2g^2}\kappa_{ab}F^a\star F^b + \frac{1}{12k^2g^6}f_{abc}{\star} F^a{\star} F^b{\star} F^c + \Or(k^{-4})\bigg]
\end{equation}
which indeed expands the theory in powers of $k^{-2}$ around Yang-Mills theory.

Although we will not be employing this fact in this paper, the Third Way theory is also consistent when the Chern-Simons levels in \eqref{eq:thirdwayLag} are unequal, which may be achieved by adding Chern-Simons actions for either factor of $G$ or for $G\times G$ with a different bilinear form (which we call $\mathcal E$ much later in this paper).

\subsubsection{`Imaginary' Third Way}
\label{sec:ImaginaryThirdWay}
Now we move on to the complexified gauge group $\bbD=G_{\mathbb C}$. This group has a Lie algebra $\fg_{\mathbb C} = \fg \oplus i\fg$ and is ---as we will see--- similar to the Third Way case in many ways. Its algebra has generators $\{T_a\}$ for $\fg$ and $\{iT_a\}$ for $i\fg$  satisfying
\begin{align}
    [T_a,T_b] &= f_{ab}{^c}T_c
    &
    [T_a,iT_b] &= f_{ab}{^c}iT_c
    &
    [iT_a,iT_b] &= -f_{ab}{^c}T_c
\end{align}
Let us further introduce an invariant non-degenerate inner product $\eta = \kappa_{\mathbb C} - \overline{\kappa_{\mathbb C}}$,  the imaginary part of the complexification of an invariant form over $\fg$. This is
\begin{align}
    \eta(T_a,T_b) &= 0
    &
    \eta(T_a,iT_b) &= 2\kappa_{ab}
    &
    \eta(iT_a,iT_b) &= 0
\end{align}
It is now clear that we obtain a Manin pair $(\fg\hookrightarrow\fg_{\mathbb C},\eta)$. Let us now derive the Manin theory in close analogy to how we treated the Third Way case. We define a basis
\begin{equation}
    \tT^a_k = \frac{1}{2}k^{-1}\kappa^{ab}iT_b
\end{equation}
for $i\fg$, yielding
\begin{align}
    [T_a,T_b] &= f_{ab}{^c}T_c\,,
    &
    \eta(T_a,T_b) &= 0\,,
    \bigg.\\
    [\tT^a_k,T_b] &= f_{bc}{^a}\tT^c\,,
    &
    \eta(\tT^a_k,T_b) &= k^{-1}\delta^a{_b}\,,
    \bigg.\\
    \label{eq:minus sign}
    [\tT^a,\tT^b] &= {-}\frac{1}{4}k^{-2}f^{abc}T_c\,,
    &
    \eta(\tT^a,\tT^b) &= 0\,,
    \bigg.
\end{align}
which completely agrees with the $G\times G$ case, up to the minus sign in equation \ref{eq:minus sign}!

Upon introducing a gauge field $\bbA = A^aT_a + \tA_a\tT^a_k$ and define $M$ as
\begin{align}
    MT_a &= 0
    \\
    M\tT^a_k &= k\kappa^{ab}T_n
\end{align}
we arrive at the action
\begin{equation}
    S[\bbA] = \int\bigg[\tA_aF^a {-} \frac{1}{24k^2}f^{abc}\tA_a\tA_b\tA_c + \frac{1}{2}g^2\kappa^{ab}\tA_a\star\tA_b\bigg].
\end{equation}
Much like the Third Way theory, this reduces to Yang-Mills theory with a coupling $g^2$ in the $k\to\infty$ limit. 

\subsection{Equivalences of Manin theories from ``twists'' of Lie quasi-bialgebras}

We address the question: \emph{how many Lie quasibialgebra structures give rise to the same gauge theory?} This is relevant whenever we need or want to specify the theory by Lie quasibialgebra data instead of just the Manin pair/Dirac structure, which will be relevant for the Hamiltonian formulation of the theory.

Bangoura and Kosmann-Schwarzbach describe a \emph{twist} operation \cite{bangoura1993double} which takes a Lie quasibialgebra $(\fg,\tilde f,\tilde h)$ to another Lie quasibialgebra $(\fg,\tilde f',\tilde h')$ that fixes the underlying Lie algebra $\fg$. This is given in terms of a skew matrix $R^{ab}$ acting as
\be
\label{eq:explicittwist}
\begin{split}
    \tilde f^{ab}{}_c &\to \tilde f^{ab}{}_c +2 R^{d[a} f_{dc}{}^{b]}\\
    \tilde h^{abc}&\to \tilde h^{abc} + 3 \tilde f^{[ab}{}_d R^{c]d} -3 R^{[a|e|} R^{b|d|} f_{de}{}^{c]}
\end{split}
\ee
The skew matrix $R$ has a geometric interpretation: it parameterises deformations of a fixed lagrangian complement $\tfg$ onto a neighbouring one $\tfg'$. Explicitly, a basis of $\tfg'$ is given as $\tilde T^a{}'=\tilde T^a+ R^{ab}T_b$ and isotropicity forces $R$ to be skew.

 Twists define an equivalence relation on Lie quasibialgebras. We will show that \emph{equivalent Lie quasibialgebras yield equivalent field theories} at the level of explicit actions \eqref{eq:ManinLagExplicit}.

One way to see that this twist operation leads to equivalent gauge theories is via explicit calculation in the Hamiltonian formulation of the theory, that we will pursue later. An alternative more elegant argument is via the AKSZ construction \cite{Alexandrov:1995kv} for Chern-Simons theory. To that end, and to prove that twists take the above form in our conventions, we display an alternative ``BRST-esque'' formulation of Lie quasibialgebras.

\paragraph{A ``BRST'' version of the ``big bracket'' construction of \cite{bangoura1993double}.} Firstly recall the Chevalley-Eilenberg differential defining Lie algebra cohomology: introduce anticommuting variables $c^a$ with $abc$ $\frak g$-valued indices. Then the CE differential may be written
\be
Q\equiv \frac{1}{2} f_{bc}{}^a c^b c^c\frac{\pd}{\pd c^a}
\ee
and $Q^2=0$ is equivalent to the Jacobi identities for $f_{bc}{}^a$ the $\frak g$ structure constants. If the Lie algebra has an invariant nondegenerate symmetric bilinear form $\kappa$, then $Q$ is a hamiltonian vector field for the following Poisson structure
\be
\{c^a,c^b\}=\kappa^{ab}\implies Q=\{\tfrac{1}{2} f_{bc}{}^d\kappa_{da},\bullet\}\,,
\ee
and the Jacobian identity may also be written in the style of the `classical master equation' $\{\tfrac{1}{2} f_{bc}{}^d\kappa_{da}c^a c^b c^c,\tfrac{1}{2} f_{bc}{}^d\kappa_{da}c^a c^b c^c\}=0$. 

We may define various kinds of (quasi-)bialgebra structures on $\frak g$ by introducing alongside $c^a$ their duals $\tilde c_a$ along with the Poisson bracket $\{c^a,\tilde c_b\}=\delta^a_b\,,\{c,c\}=\{\tilde c,\tilde c\}=0$ and permutations thereof. (This Poisson bracket captures the inner product $\eta$.) In all cases there is a Lie algebra structure on $\frak{g}+\frak{g}^\star$ compatible with the split signature symmetric bilinear form defined by this Poisson bracket. A \emph{Lie quasibialgebra structure} on a Lie algebra $\frak g$ with structure constants $f_{bc}{}^a$ is equivalently defined via the quantities $f_{ab}{}^c,\tilde f^{bc}{}_a,\tilde h^{abc}$ appearing in the hamiltonian function
\be
\Theta=\tfrac{1}{2}f_{ab}{}^cc^ac^b \tilde c_c+\tfrac{1}{2}\tilde f^{bc}{}_a\tilde c_b\tilde c_c c^a + \tfrac{1}{3!} \tilde h^{abc} \tilde c_a \tilde c_b\tilde c_c\equiv f+\tilde f + \tilde h\,.
\ee
(If we were to switch on a term $c^3$ this would be a ``proto-Lie bialgebra''.) These quantities must satisfy, by definition
\be
\{\Theta,\Theta\}=0\,,
\ee
on top of the Lie algebra Jacobi identity for $\frak g$ ($\{f,f\}=0$), and by counting powers of $c,\tilde c$ we find the above condition is equivalent to
\be
Q_f\tilde f=0\,,\quad Q_{\tilde f}\tilde h=0\,,\quad \{\tilde f,\tilde f\}+2 Q_{f}\tilde h=0
\ee
where $Q_f\equiv \{f,\bullet\}=\{\tfrac{1}{2}f_{ab}{}^cc^ac^b,\bullet\}$ is the vector field with hamiltonian $f$. A Lie bialgebra is the case $\tilde h=0$. The Lie algebra over $\frak{g}+\frak{g}^\star$ defined by $\Theta$ is the  double $\frak d$ of $\fg$ constructed from the (quasi)-bialgebra structure.

By twist we mean a canonical transformation generated by a function of the form
\be
\Psi=\frac{1}{2} R^{ab}\tilde c_a\tilde c_b\,.
\ee
Here $R^{ab}$ is necessarily antisymmetric. This transforms the hamiltonian function to
\be
\Theta\to e^{\{\Psi,\bullet\}} \Theta= \Theta + \{\Psi,\Theta\} +\frac{1}{2}\{\Psi, \{\Psi,\Theta\}\}\,;
\ee
due to the form of $\Theta$ and $\Psi$ it is easy to confirm that the series terminates as indicated.

Since $e^{\{\Psi,\bullet\}}$ is a canonical transformation it is trivial to check that $\{e^{\{\Psi,\bullet\}}\Theta, e^{\{\Psi,\bullet\}}\Theta\}=0$ and also that the term $c^2\tilde c$ is invariant, so that $e^{\{\Psi,\bullet\}}\Theta$ defines a new Lie quasi-bialgebra. Explicit calculation leads to the formulas \eqref{eq:explicittwist} for the twist.

\paragraph{An AKSZ-based argument for twist invariance.} Let us pretend, in the first instance, that the mass term involving $M^{ab}$ in the Manin theory action \eqref{eq:ManinLagExplicit} is zero. Then we are dealing with pure Chern-Simons theory, which admits an AKSZ sigma model construction \cite{Alexandrov:1995kv} via the space of supermaps $\maps(T[1] M_3,\fg[1]\oplus \fg^\star[1])$\footnote{For the Lie bialgebra case see \cite{Arvanitakis:2021lwo} for more details, to which we also refer for QP-manifold conventions.}. The target $\fg[1]\oplus \fg^\star[1]$ is the purely fermionic supermanifold with coordinates $c^a,\tilde c_a$ that we were using above.

The fields of the AKSZ sigma model are, as usual, the component fields of the superfields $c^a(\sigma,\dr \sigma)$ and $\tilde c_a(\sigma,\dr \sigma)$ with $\sigma$ coordinates on $M_3$ and $\dr \sigma$ their differentials. The 1-form component fields may be identified as the 1-forms $A^a,\tilde A_a$ appearing in \eqref{eq:ManinLagExplicit}. Canonical transformations on the AKSZ target $\fg[1]\oplus \fg^\star[1]$ lift to canonical transformations for the graded symplectic structure on $\maps(\cdots)$, namely (the inverse of) the BV antibracket. Therefore, the lift of the canonical transformation $e^{\{\Psi,\bullet\}}$ displayed above is necessarily an equivalence of BV theories.

On the fields of the AKSZ sigma model this transformation acts by
\be
c^a(\sigma,\dr \sigma)\to c^a(\sigma,\dr \sigma) + R^{ba}\tilde c_b(\sigma,\dr \sigma)\,,\qquad \tilde c_b(\sigma,\dr \sigma)\to \tilde c_b(\sigma,\dr \sigma)
\ee
so in particular it acts as $A^a \to A^a + R^{ba}\tilde A_b$ and $\tilde A\to \tilde A$ on the original gauge fields appearing in the action \eqref{eq:ManinLagExplicit}.

The introduction of the mass term breaks the master equation obtained via the AKSZ construction, which is anyway expected. However, assuming that the master equation for the BV formulation of the Manin theory can be written down using the fields of $\maps(\cdots)$, we observe that upon switching off all of the ghosts and antifields, the canonical transformation leaves the mass term invariant, since the latter only depends on $\tilde A$ which does not transform.

The moral here is that the twisting of Lie quasi-bialgebras may be undone by the field redefinitions $A^a \to A^a + R^{ba}\tilde A_b$ and $\tilde A\to \tilde A$ applied to the action \eqref{eq:ManinLagExplicit}.

\subsection{Hamiltonian formulation; classification in the quasi-triangular case}
\label{sec:hamiltonian}
We now construct the Hamiltonian action for the theory. This goes through identical steps to the calculation of the Hamiltonian for the \trd{} theory \cite{Arvanitakis:2015oga}. As such we will be brief. There are two outcomes of this analysis: first, that (in Lorentzian signature) the Hamiltonian is \emph{bounded below}; second, that for the class of \emph{quasi-triangular Lie quasibialgebras} --- which includes the perhaps more familiar quasi-triangular Lie bialgebras, defined via a Yang-Baxter equation --- Manin theories may be completely classified.

Since the time derivatives appear as in Chern-Simons theory, the off-shell phase space is, a priori, unaffected by the mass term in the action \eqref{eq:ManinLag}. Therefore we employ the usual split $\mathbb A\equiv \mathbb A^0\dr t +\bbalpha$, where we assume spacetime is (locally) of the form $M_3=\mathbb R\times \Sigma$, $\mathbb R$ being ``time'' $t$, and $\Sigma$ being 2-dimensional ``space''. In pure Chern-Simons theory, $\mathbb A^0$ become lagrange multipliers enforcing the vanishing of the Gauss law constraint $\bbchi$:
\be
\bbchi\equiv \dr_\Sigma \bbalpha +\bbalpha^2\,.
\ee
(Above $\dr_\Sigma$ is the de Rham differential along $\Sigma$ and $\bbalpha$ is a $\fd$-valued 1-form on $\Sigma$ for each value of $t$.)

For Manin theory, the difference arises from the mass term. What happens is that half of the $\mathbb A^0$ is no longer a lagrange multiplier:  assuming $g(\dr t,\dr t)\equiv g^{tt}\neq 0$, we have $\bbalpha \star \dr t=\dr t \star \bbalpha=0$ whence
\be
\tfrac{1}{2}\eta(\mathbb A \star M\mathbb A)=\tfrac{1}{2}\eta( \mathbb A^0 M\mathbb A^0 (\dr t\star \dr t)+ \bbalpha \star M\bbalpha)\,.
\ee
With the same assumption, $\dr t\star \dr t$ is everywhere nonvanishing and proportional to the volume form on $M_3$. Given $M\mathbb A^0=M\tilde A^0$  (due to \eqref{eq:Midentities}) we see we may integrate out $\tilde A^0$, replacing it with part of the Gauss law constraint for pure Chern-Simons theory:
\be
\chi^a\equiv \dr_\Sigma \alpha^a +\frac{1}{2} f_{bc}{}^a \alpha^b\alpha^c + \tilde f^{ca}{}_b \alpha^b \tilde \alpha_c + \frac{1}{2}\tilde \alpha_b\tilde\alpha_c \tilde h^{bca}\,.\qquad (\bbchi\equiv \chi^a T_a + \tilde \chi_a \tilde T^a\,.)
\ee
Note that we have split $\fd$-valued objects into $\fg$ and $\tfg$ parts in the usual way $(\bbchi\equiv \chi^a T_a + \tilde \chi_a \tilde T^a\,, \bbalpha=\alpha^a T_a + \tilde \alpha_a \tilde T^a\,, \mathbb A^0\equiv A^0{}^a T_a + \tilde A^0_a\tilde T^a\,.)$ To write the Hamiltonian we need to trade $\chi^a$ (a 2-form on $\Sigma$) for a scalar $\check\chi^a$, which to be concrete we define by $\chi^a\equiv \check\chi^a \star \dr t$ (where $\star$ remains the 3-dimensional Hodge star).

After eliminating $\tilde A^0$ we obtain the Hamiltonian action, equivalent to \eqref{eq:ManinLag}:
\be
S=\int_{M_3} \dr t \big(\eta(-\tfrac{1}{2}\bbalpha \dot \bbalpha) +   A^0{}^a\tilde\chi_a\big) +\tfrac{1}{2}\big(\tilde \alpha_a{}_\mu \tilde \alpha_{b}{}_\nu M^{ab} g^{\mu\nu}-g^{tt}\check\chi^a\check\chi^b M_{ab}\big) \mathrm{vol}_g\,,
\ee
with $\mathrm{vol}_g\equiv {\star}1$ being the volume form associated to the metric $g$ on $M_3=\mathbb R\times  \Sigma$ and
\be
\tilde\chi_a= \dr_\Sigma \tilde \alpha_a +\tfrac{1}{2}\tilde f^{bc}{}_a \tilde \alpha_b\tilde\alpha_c - \alpha^b \tilde\alpha_c f_{ba}{}^c
\ee
coming from the Gauss law constraint $\bbchi$ for Chern-Simons theory. It is trivial to check that $\tilde \chi_a$ generates the gauge transformations \eqref{eq:infinitesimalgaugetransfs} and thus the proof of gauge invariance in that subsection shows that the Hamiltonian along with these constraints form a first-class constrained hamiltonian system.

\paragraph{Boundedness of the Hamiltonian.} We read off the Hamiltonian function as
\be
\label{eq:Maninhamiltonian}
H\equiv \frac{1}{2}\Big(\tilde \alpha_a{}_\mu \tilde \alpha_{b}{}_\nu M^{ab} g^{\mu\nu}-g^{tt}\check\chi^a\check\chi^b M_{ab}\Big)\,.
\ee
This is non-negative when
\begin{enumerate}
    \item the matrices $M^{ab}$ and $M_{ab}$ are positive-definite, which may always be arranged when $\fg$ is a compact Lie algebra, and
    \item $M_3$ is a Lorentzian-signature spacetime,  we have chosen coordinates where $g_{\mu\nu}$ is block-diagonal between time and space\footnote{This is indeed a choice of gauge for the background metric, as may be seen via a lapse function/shift vector parameterisation of $g$, ADM-style.}, and $t$ is a timelike coordinate.
\end{enumerate}
Then the second term is a sum of positive squares. Moreover since $\tilde \alpha_{a0}=0$ by definition, the first term is also a sum of squares by block-diagonality.

\paragraph{Classification of Manin theories in the quasi-triangular case.} Quasi-triangular Lie quasibialgebras are defined by Bangoura and Kosmann-Schwarzbach \cite[Definition 1.7]{bangoura1993double}, in terms of a skew matrix $R^{ab}$, an $\ad\fg$-invariant symmetric form $s^{ab}$, and the 3-form $\tilde h^{abc}$, that solve a classical Yang-Baxter-type equation. In our conventions we may derive the equation via the ``BRST'' approach of the previous subsection: assuming $\tilde f$ is \emph{coboundary}\footnote{These are called \emph{exact} in \cite{bangoura1993double}.}\,,
\be
\label{eq:coboundary}
\tilde f^{ab}{}_c=2 R^{d[a} f_{dc}{}^{b]}\,, \iff \tilde f=-Q_f\underbrace{(\tfrac{1}{2} R^{ab}\tilde c_a\tilde c_b)}_{\equiv R}
\ee
for some skew matrix $R^{ab}$, we have a Lie quasibialgebra if and only if $Q_f(\{R,Q_f R\} +2\tilde h)=0$. A \emph{triangular} one is defined to be such that $\{R,Q_f R\} +2\tilde h=0$; for a \emph{quasi-triangular} one the right-hand side is allowed, by definition, to be proportional to $f_{de}{}^a s^{bd}s^{ec} \tilde c_a\tilde c_b\tilde c_c$, which is annihilated by $Q_f$ when $s^{ab}$ is $\operatorname{ad}\fg$-invariant.

To write down a concrete and index-free expression for the aforesaid equation we invoke some nondegenerate inner product $\kappa_{ab}$ on $\fg$ to define operators $R:\fg\to\fg$ from the skew matrix $R^{ab}$ and $s:\fg\to\fg$ from the symmetric matrix $s^{ab}$ via $R(T_a)\equiv R_a{}^b T_b$ and via $s(T_a)= s_a{}^b T_b$ respectively; then this data must solve
\be
\label{eq:modifiedmodifiedCYBE}
[Rx,Ry]-R([Rx,y]+[x,Ry])+[sx,sy]=\tilde h(x,y)\,,\qquad \forall x,y\in\fg\,.
\ee
(We also defined $\tilde h(x,y)= \tilde h_{ab}{}^cx^a y^b T_c$.) When $\tilde h=0$ and $s^{ab}=c\kappa^{ab}$ for some number $c$, we recognise the usual modified Classical Yang-Baxter Equation (mCYBE), solutions of which define Lie bialgebras\footnote{For an exposition of these in a physics context we refer to a work by Vicedo \cite{Vicedo:2015pna}.}. For this reason we shall call equation \eqref{eq:modifiedmodifiedCYBE} the \emph{amended} mCYBE, or \emph{amCYBE}.


We will show that the Manin theory depends, in this case, only on $s$ and $\fg$ via the 3-form constructed above, or, equivalently, via the map
\be
x,y\to[sx,sy]\,.
\ee For this, it is convenient to write $\fd$-valued objects not in terms of the basis spanned by $T_a$ and $\tilde T^a$ but in terms of $T_a$ and $\tilde T_a=\kappa_{ab}\tilde T^b$, so that the double $\frak d$ is a sum of two copies of $\frak g$  \emph{as a vector space}, and $\eta(T_a,\tilde T_b)=\kappa_{ab}$ are the nonvanishing matrix elements up to permutations. We will also rewrite $\tilde \alpha$ (valued in $\tfg)$ as the $\fg$-valued form
\be
\Pi\equiv \Pi^a T_a=\tilde \alpha_b\kappa^{ba}T_a\,.
\ee
In this notation the phase space action is (up to a temporal integration by parts)
\be
\begin{split}
 S=\int_{\mathbb R\times \Sigma}\dr t\Big(\dot \alpha^a \kappa_{ab}\Pi^b + A^{0a}\underbrace{\big(\dr_\Sigma \Pi_a +\tfrac{1}{2}\tilde f^{bc}{}_a \Pi_b\Pi_c -\alpha^b \Pi_c f_{ba}{}^c\big)}_{=\tilde \chi_a}\Big)  +\mathrm{vol}_g H\\
 \chi^a=\dr_\Sigma \alpha^a +\frac{[\alpha,\alpha]^a}{2}+\alpha^b \Pi_c \tilde f^{ca}{}_b+\frac{1}{2}\tilde h^{abc}\Pi_b\Pi_c\,.
\end{split}
\ee
We now recognise $\Pi$ as the conjugate momentum corresponding to $\alpha$.

So far this is completely general. Assuming that $\frak g$ is quasi-triangular, we may simplify by removing all mentions of $\tilde f$ via a canonical transformation:
\be
\alpha\to \alpha -R(\Pi)\,, \Pi\to \Pi.
\ee

\begin{claim}Under said transformation,
\be
\begin{split}
\tilde \chi_b \kappa^{ba}T_a\to \dr_\Sigma \Pi+[\alpha,\Pi]\,,\\
\chi^aT_a\to \dr \alpha +\tfrac{1}{2}[\alpha,\alpha]- R(\dr_\Sigma \Pi+[\alpha,\Pi])+\frac{1}{2} [s\Pi,s\Pi]\,,
\end{split}
\ee
where all of the brackets are now ones in $\frak g$ and not $\frak d$.
\end{claim}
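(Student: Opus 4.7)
The plan is to verify both equalities by direct substitution $\alpha^a \mapsto \alpha^a + R^{af}\tilde\alpha_f$ (the component form of $\alpha\to\alpha-R(\Pi)$ under the convention $R(\Pi)^a = R^{ba}\tilde\alpha_b$) into the explicit expressions for $\tilde\chi_b$ and $\chi^a$ derived just above the claim. Two algebraic inputs are essential: the coboundary identity $\tilde f^{ab}{}_c = 2R^{d[a}f_{dc}{}^{b]}$, which trades $\tilde f$ for combinations of $R$ and $f$, and the amCYBE \eqref{eq:modifiedmodifiedCYBE}, which, on account of the identity $[R\Pi,\Pi]=[\Pi,R\Pi]$ valid for $\fg$-valued $1$-forms (the sign flip from antisymmetry of $f$ cancels the sign flip from the wedge product), simplifies to
\be
\tilde h(\Pi,\Pi) = [R\Pi,R\Pi] - 2R([R\Pi,\Pi]) + [s\Pi,s\Pi]\,.
\ee

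For $\tilde\chi$ I would first use the coboundary formula, together with a relabeling of dummies that exploits the wedge antisymmetry $\tilde\alpha_c\wedge\tilde\alpha_d=-\tilde\alpha_d\wedge\tilde\alpha_c$, to rewrite $\tfrac{1}{2}\tilde f^{cd}{}_b\tilde\alpha_c\tilde\alpha_d = R^{ec}f_{eb}{}^d\tilde\alpha_c\tilde\alpha_d$. The shift applied to $-\alpha^c\tilde\alpha_d f_{cb}{}^d$ then generates an equal and opposite contribution, exactly canceling this rewritten $\tilde f$-term. What remains is $\dr\tilde\alpha_b - \alpha^c\tilde\alpha_d f_{cb}{}^d$, which by total antisymmetry of $f_{abc}\equiv\kappa_{ad}f_{bc}{}^d$ equals $\kappa_{ba}(\dr\Pi+[\alpha,\Pi])^a$. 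Contracting with $\kappa^{ba}T_a$ yields the first identity of the claim.

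For $\chi$ I would organize the expansion by powers of $\tilde\alpha$. At order $\alpha\alpha$ nothing changes, giving $(\dr\alpha+\tfrac{1}{2}[\alpha,\alpha])^a$. At order $\alpha\tilde\alpha$ a cancellation analogous to the $\tilde\chi$ case occurs between the $R$-cross-term from the shift in $\tfrac{1}{2}[\alpha,\alpha]$ and the first half of $\alpha^b\tilde\alpha_c\tilde f^{ca}{}_b$; the second half leaves a residual $-\alpha^b\tilde\alpha_c R^{da}f_{db}{}^c$ which equals $-R([\alpha,\Pi])^a$ after total antisymmetry of $f$ and a dummy relabeling. Together with the $-R(\dr\Pi)^a$ generated by the shift in $\dr\alpha^a$, this assembles into $-R(\dr\Pi+[\alpha,\Pi])^a$. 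At order $\tilde\alpha\tilde\alpha$, three contributions combine: $\tfrac{1}{2}[R\Pi,R\Pi]^a$ from the square of the shift in $\tfrac{1}{2}[\alpha,\alpha]$; a contribution from the shift in the $\tilde f$-term that, after using the coboundary identity and total antisymmetry of $f$, equals $R([R\Pi,\Pi])^a - [R\Pi,R\Pi]^a$; and the unchanged $\tfrac{1}{2}\tilde h(\Pi,\Pi)^a$. Substituting the amCYBE identity displayed above, the $[R\Pi,R\Pi]$ and $R([R\Pi,\Pi])$ terms cancel in pairs, leaving precisely $\tfrac{1}{2}[s\Pi,s\Pi]^a$.

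The main obstacle is purely bookkeeping: the many index contractions, signs arising from the skewness of $R^{ab}$ versus the wedge antisymmetry of $\tilde\alpha\wedge\tilde\alpha$, and the careful distinction between the raised-index skew tensor $R^{ab}$ and the endomorphism $R:\fg\to\fg$ related to it by $\kappa$. Conceptually, however, the result is foreordained: the transformation $\alpha\to\alpha-R\Pi$ lifts to field space the twist \eqref{eq:explicittwist} (with parameter $-R$) that sends the coboundary $\tilde f$ to zero and, through the amCYBE, sends $\tilde h$ to $[s\bullet,s\bullet]$. The transformed constraints are then simply the $\fg$- and $\tfg$-components of the same $\fd$-valued Chern--Simons Gauss-law curvature $\bbchi=\dr\bbalpha+\bbalpha^2$ re-expressed in the new splitting, with the $-R(\dr\Pi+[\alpha,\Pi])$ mixing term in $\chi$ arising precisely from the basis-change $\tilde T^{a\prime}=\tilde T^a+R^{ab}T_b$ of $\tfg$. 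One could alternatively bypass the direct calculation altogether by invoking the AKSZ twist-invariance argument of the previous subsection.
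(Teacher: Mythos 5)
Your proposal is correct and follows essentially the same route as the paper's proof: substitute $\alpha\to\alpha-R(\Pi)$, use the coboundary identity to trade $\tilde f$ for $R$ and $f$ (the paper packages this as $\tilde\chi^\kappa=\dr_\Sigma\Pi+[\alpha+R(\Pi),\Pi]$ and $\alpha^b\Pi_c\tilde f^{ca}{}_b T_a=[\alpha,R\Pi]-R[\alpha,\Pi]$, which is exactly your order-by-order bookkeeping in compact form), and finish with the amCYBE specialised to $x=y=\Pi$ via $[R\Pi,\Pi]=[\Pi,R\Pi]$. The only difference is presentational (components versus the $[\cdot,\cdot]_R$-bracket notation), and your closing remarks on the twist interpretation and the AKSZ alternative match the paper's own surrounding discussion.
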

\begin{proof}
Rewrite
\be
\tilde \chi=\tilde \chi_a \tilde T^a
\ee
as the following closely related $\frak g$-valued 2-form
\be
\tilde \chi^\kappa\equiv \tilde \chi_a \kappa^{ab}T_b
\ee
which immediately gives (using the fact $\tilde f$ is coboundary)
\be
\tilde\chi^\kappa=\dr_\Sigma \Pi +[\alpha,\Pi]+\tfrac{1}{2} \tilde f^{bc}{}_d \Pi_b\Pi_c\kappa^{ad}T_a=\dr_\Sigma \Pi +[\alpha,\Pi]+\tfrac{1}{2}[\Pi,\Pi]_R=\dr_\Sigma \Pi +[\alpha+ R(\Pi),\Pi]\,.
\ee
which goes to $\dr_\Sigma \Pi +[\alpha,\Pi]$ under the canonical transformation.

It remains to calculate $\chi$. This contains the following annoying term which we write in terms of $R$ and the original bracket $[,]$ on $\frak g$ again:
\be
\alpha^b\Pi_c\tilde f^{ca}{}_bT_a=[\alpha,R\Pi]-R[\alpha,\Pi]\,.
\ee
This now makes the calculation for $\chi$ straightforward:
\be
\chi\to (\dr_\Sigma \alpha + \alpha^2)-R(\dr_\Sigma \alpha + [\alpha,\Pi])+R[R\Pi,\Pi]-\tfrac{1}{2}[R\Pi,R\Pi] +\tfrac{1}{2}\tilde h(\Pi,\Pi)\,.
\ee
The Claim then follows from the amCYBE \eqref{eq:modifiedmodifiedCYBE}.\end{proof}

Note that we may now remove the term involving $R$ from the action because it is proportional to the constraint enforced by $A^{0a}$. (E.g.~by redefining this multiplier.) Having done this, the phase space action becomes
\be
\begin{split}
 S=\int_{\mathbb R\times \Sigma}\dr t\;\kappa\big(\dot \alpha\Pi + A^0(\dr_\Sigma \Pi+[\alpha,\Pi])\big) + \mathrm{vol}_g H \,.
\end{split}
\ee
where the Hamiltonian $H$ takes the original form \eqref{eq:Maninhamiltonian} except for the expression for $\chi$:
\be
H=\tfrac{1 }{2}\big(\Pi_{ia}\Pi_{jb} M^{ab}g^{ij}- g^{00} \check\chi^a\check\chi^b M_{ab}\big)\,,\qquad \chi=\dr_\Sigma \alpha +\tfrac{1}{2}[\alpha,\alpha]+\tfrac{1}{2} [s\Pi,s\Pi]\,,
\ee
and where, again, $\check \chi^a$ is the scalar dual to the 2-form $\chi^a$ along $\Sigma$ ($\chi^a=\check \chi^a {\star} \dr t$).

This is the classification result. We note that the same result could be obtained more abstractly via twists: it is proven in \cite[Proposition 1.8]{bangoura1993double} that any quasi-triangular Lie quasibialgebra is equivalent to one with $\tilde f=0$ and $\tilde h$ equalling the degree 3 cocycle $f_{de}{}^a s^{bd}s^{ec} \tilde c_a\tilde c_b\tilde c_c$ that contributes to the  amCYBE \eqref{eq:modifiedmodifiedCYBE}. 

\subsubsection{Example: theories of mCYBE type and their `duals'} In this case by definition $\tilde h=0$ and $s^{ab}=c\kappa^{ab}$, $\kappa$ being the inverse of a nondegenerate inner product on $\fg$, so that the amCYBE \eqref{eq:modifiedmodifiedCYBE} becomes the usual mCYBE:
\be
\label{eq:mCYBE}
[Rx,Ry]-R([Rx,y]+[x,Ry])+c^2[x,y]=0\,.
\ee

Solutions of the mCYBE for a real Lie algebra $\fg$ fall into three cases depending on the sign of $c^2$. This sign determines the structure of the double $\fd$ of $\fg$ as follows:
\be
\begin{cases} c^2>0: &\fd \cong \fg\oplus \fg\,,\qquad \text{``factorisable bialgebra''}\\ c^2=0: &\fd \cong \fg\ltimes \fg^\star\,,\qquad\text{``triangular bialgebra''} \\ c^2<0: &\fd \cong \fg_{\mathbb C}\,,\qquad\text{``imaginary bialgebra''}
\end{cases}
\ee
The terminology here is not entirely standard; in particular we do not follow Vicedo \cite{Vicedo:2015pna} where a compact proof of the classification may be found. The associated Manin pairs $(\fg\into\fd)$ and Manin theories are
\begin{itemize}
    \item \underline{factorisable case:} $\fg$ embeds as the diagonal subalgebra in the direct sum $\fd=\fg\oplus \fg$ where the summands commute. The associated Manin theory is the \trd{} theory described in Sec.~\ref{sec:3rdway}.
    \item \underline{triangular case:} $\fg$ is as indicated in the semidirect product $\fg\ltimes \fg^\star$, and the associated Manin theory is Yang-Mills theory (see Sec.~\ref{sec:YM}).
    \item \underline{imaginary case:} $\fg$ embeds as the indicated real form of the complexification $\fg_{\mathbb C}$. The Manin theory is described in Sec.~\ref{sec:ImaginaryThirdWay}.
\end{itemize}

\paragraph{`Duals' to mCYBE type theories.} Solutions to the mCYBE \eqref{eq:mCYBE} define Lie bialgebras, not just Lie quasibialgebras. This means that $\tfg\cong \fg^\star$ is a Lie algebra. If we use $\kappa_{ab}$ to identify $\fg\cong\tfg$ we find a second Lie algebra structure on $\fg$ given by
\be
\label{eq:Rbracket}
[x,y]_R\equiv [Rx,y]+ [x,Ry]\,.
\ee
(This is the coboundary equation \eqref{eq:coboundary} up to index gymnastics.)

In these cases we therefore have a Manin triple structure $(\fd,\fg,\tfg)$ instead of just a Manin pair. In particular $(\tfg\into \fd)$ is a Manin pair --- in some sense dual to the original pair $(\fg\into \fd)$ --- and thus every solution to the mCYBE gives rise to a \emph{pair} of Manin theories. We have already seen an example of a `dual' Manin theory: for $R=0$ we have a triangular solution to the mCYBE where $\tfg$ is abelian; this gives rise to the Freedman-Townsend theory of Sec.~\ref{sec:freedmantownsend}. We emphasise however that these `duals' are not unique.

As an example of this phenomenon, we exhibit a different triangular structure. In general, triangular bialgebra structures on a Lie algebra $\fg$ correspond to subalgebras with an invariant nondegenerate 2-form on them; in that case such $\fg$ are called quasi-Frobenius\footnote{See again Vicedo \cite[section 2.3]{Vicedo:2015pna} for a quick proof.}. Triangular structures are classified explicitly  for $\mathfrak{sl}(3,\mathbb C)$ in the Chari-Pressley book \cite[Example 3.1.8]{Chari:1994pz}. Of those, example (v)(a) descends to the compact real form $\fg=\mathfrak{su}(3)$. The subalgebra in question is that of diagonal matrices in $\mathfrak{su}(3)$. This is the 2-dimensional abelian subalgebra spanned by
\be
H_1=i\text{diag}(1,-1,0)\,,\qquad H_2=i\sqrt{3}/3\text{diag}(1,1,-2)
\ee
where $\kappa(x,y)=\tr(xy)$. If we pick $\omega(H_1,H_2)=-\omega(H_2,H_1)=1$ for the 2-form, we calculate the corresponding $R$ as
\be
R(\text{non-diagonal})=0\,,\qquad R(H_1)=2H_2\,,\qquad R(H_2)=-2 H_1\,.
\ee
It is easy to verify this solves the mCYBE \eqref{eq:mCYBE} (with $c^2=0$). This example furnishes a `dual' to $\fg=\mathfrak{su}(2)$ Yang-Mills in its Manin theory formulation which has nonabelian gauge algebra $\tfg$ as given by \eqref{eq:Rbracket}.

It is tempting to conjecture that these `dual' Manin theories are honestly dual to each other in the physical sense. We leave this question for the future. 

\section{Evanescent supersymmetry}
\paragraph{Generalities.} Say we are studying a theory with action $S$, which admits a supersymmetric(-esque) generalisation to a theory $S_\text{SUSY}$. (Parenthetical to be explained shortly.) We will say $S_\text{SUSY}$ has \textbf{evanescent supersymmetry} when $S_\text{SUSY}$ and $S$ are equivalent actions, in the sense that integrating out fields from $S_\text{SUSY}$ yields $S$. (One way to formalise this is that $S$ and $S_\text{SUSY}$ define homotopy-equivalent $L_\infty$-algebras in the sense of \cite{Arvanitakis:2020rrk,Arvanitakis:2021ecw}.)

An immediate corollary of this definition is that $S_\text{SUSY}$ describes exactly the same number of on-shell degrees of freedom as $S$ does; in other words, $S_\text{SUSY}$ does not assign superpartners to the degrees of freedom described by $S$. The point of an evanescent-supersymmetric formulation is that it lends itself well to exact quantum calculations via localisation arguments.

We will see that for Manin gauge theory the symmetry transformations are partially deformed relative to conventional $N=2$ SUSY in 3D; in particular, the algebra is deformed by terms proportional to the Manin mass operator $M$. This is the ``price to pay'' for introducing SUSY in a theory without introducing superpartners to the on-shell degrees of freedom, and is why we expect in general the evanescent supersymmetry transformations to be ``supersymmetry-esque''.

\subsection{Duistermaat-Heckman formula as evanescent localisation}

Before moving to discuss path integrals, we first take a look at the familiar example of the Duistermaat-Heckman formula and interpret it from the viewpoint of evanescent supersymmetry. This subsection closely follows Pestun and Zabzine's exposition \cite{pestun2017introduction}.

We consider a compact symplectic manifold $(\gM,\omega)$ of dimension $2\ell$ and on this manifold a Morse function $H \in C^\infty(\gM)$ (which we think of as an `action') such that the orbits generated by a Hamiltonian vector field $K$ (characterised by $\iota_K\omega = \extd H$) are compact (a circle or a point). We then consider the `partition function' given by the oscillatory integral $Z_H$ which by the Duistermaat-Heckman formula is given by
\begin{equation}
\label{Duistermaat-Heckman 1}
    Z_H = \int_\gM \frac{\omega^\ell}{\ell!}e^{iH} = (2\pi i)^\ell\sum_{\extd H(x_0) = 0}\frac{e^{iH(x_0)}}{\sqrt{\det\del_\mu K^\nu}}\bigg|_{x = x_0}.
\end{equation}
Now, what we want to do in the context of evanescent supersymmetry is to view this as a partition function where we have integrated out auxiliary fermionic degrees of freedom. We reintroduce these simply by viewing it through the lens of integration on supermanifolds:
\begin{equation}
    Z_H 
    = \int_\gM \big[\extd^{2\ell}x\big]~\Pf\omega~ e^{iH}
    = (-i)^\ell\int_{\Pi T\gM} \big[\extd^{2\ell}x\big|\extd^{2\ell}\psi\big]~e^{iS(x,\psi)}
\end{equation}
where we integrate over the parity-reversed tangent bundle $\Pi T\gM$ with coordinates $(x^\mu,\psi^\mu)$ ($\psi^\mu$ being the odd fibre coordinates) a `supersymmetrised' action given by 
\begin{equation}
\label{eq:SactionDH}
    S(x,\psi) = H(x) + \frac{1}{2}\omega_{\mu\nu}(x)\psi^\mu\psi^\nu.
\end{equation}
It is supersymmetric in the sense that it is left invariant by the odd vector field
\begin{equation}
    Q = \psi^\mu\pder{}{x^\mu} - K^\mu\pder{}{\psi^\mu}
\end{equation}
i.e. $Qx^\mu = \psi^\mu$ and $Q\psi^\mu = -K^\mu$. We take $Q$ to act as an odd left-derivative.

The localisation procedure can then be reformulated as follows: We pick an arbitrary metric $g_{\mu\nu}$ of which $K^\mu$ is a Killing vector (this is always possible since the orbits of $K$ are assumed compact) and note that we can deform the partition function as
\begin{equation}
    Z_H(t) = \int_{\Pi T\gM} \big[\extd^{2\ell}x\big|\extd^{2\ell}\psi\big]~e^{iS + itS_\loc}
\end{equation}
by a $Q$-exact localisation term $S_\loc$ given by
\begin{align}
    S_\loc &= QF_\loc = K^\mu K_\mu - \del_\mu K_\nu\psi^\mu\psi^\nu,
    &
    F_\loc &= g_{\mu\nu}K^\mu\psi^\nu.
\end{align}
without changing it. Indeed, one finds that
\begin{equation}
    \pder{}{t}Z_H(t) 
    = 
    \int_{\Pi T\gM} \big[\extd^{2\ell}x\big|\extd^{2\ell}\psi\big]~iS_\loc e^{iS + itS_\loc} 
    = 
    \int_{\Pi T\gM} \big[\extd^{2\ell}x\big|\extd^{2\ell}\psi\big]~Q\bigg\{iF_\loc e^{iS + itS_\loc}\bigg\}
    =
    0
\end{equation}
where we used the fact that $\int_{\Pi T\gM}[\extd^{2\ell}x|\extd^{2\ell}\psi]Q(\bullet) = 0$\footnote{There exist pure supergeometric proofs of this invariance of the measure under odd vector fields \cite{Schwarz:1995dg}. Here it is easily seen by rewriting $Q$ in terms of the de Rham derivative  and contraction-with-$K$  operators.}.  Taking the limit $t\to\infty$ we then find that the integral localises to
\begin{align}
    S_\loc|_\bos &= 0
    &
    &\Leftrightarrow
    &
    K &= 0
    &
    &\Leftrightarrow
    &
    \extd H &= 0
\end{align}
which is to say, the same locus as what the Duistermaat-Heckman formula states.

In the usual fashion of localisation arguments we then introduce coordinates around saddle points $x_0$ of $H$ as
\begin{align}
&
\begin{aligned}
    x^\mu = x_0^\mu + \frac{y^\mu}{\sqrt{t}}\,,\quad
    \psi^\mu = \frac{\chi^\mu}{\sqrt{t}}
\end{aligned}
&
&\Rightarrow
&
\big[\extd^{2\ell}x\big|\extd^{2\ell}\psi\big] &= \big[\extd^{2\ell}y\big|\extd^{2\ell}\chi\big].
\end{align}
With this we can expand
\begin{equation}
    S(x,\psi) + tS_\loc(x,\psi) 
    = 
    H(x_0) + \frac{1}{2}
    (y,\chi)
    \big[\Hess S_\loc(x_0,0)\big]
    (y,\chi)^\top
    + \Or(t^{-1/2})
\end{equation}
where by $\Hess$ we denote the Hessian matrix of a given function. With this we find that the partition function is given by
\begin{equation}
    Z_H = (2\pi i)^\ell\sum_{S_\loc|_\bos(x_0) = 0}\frac{e^{iS}}{\sqrt{\Ber\Hess S_\loc}}\bigg|_{(x,\psi) = (x_0,0)}
\end{equation}
Upon evaluating the Hessian one finds that this reduces to the Duistermaat-Heckman formula. This just formulates it in a way which is more easily related to localisation in supersymmetric quantum field theory. To relate it to the Duistermaat-Heckman formula given in equation \ref{Duistermaat-Heckman 1} we expand
\begin{equation}
\label{Hessian}
    (y,\chi)
    \big[\Hess S_\loc(x_0,0)\big]
    (y,\chi)^\top
    = B_{\mu\nu}y^\mu y^\nu + F_{\mu\nu}\chi^\mu\chi^\nu
\end{equation}
and by invariance of \ref{Hessian} under the linearised supersymmetry
\begin{equation}
    Q_{\text{lin}} = \psi^\mu\pder{}{y^\mu} + \del_\mu K^\nu(x_0)y^\mu\pder{}{\chi^\nu}
\end{equation}
(with $\del_\mu = \del/\del x^\mu$) we find that
\begin{equation}
    B_{\mu\nu} = F_{\mu\rho}\del_\nu K^\rho(x_0)
\end{equation}
yielding
\begin{equation}
    \Ber\Hess S_\loc(x_0) = \frac{\det B_{\mu\nu}}{\det F_{\mu\nu}} = \det \del_\mu K^\nu(x_0).
\end{equation}
Using this we arrive at the original Duistermaat-Heckman formula \ref{Duistermaat-Heckman 1}.

We revisited this well-known result in order to, on the one hand, remind the reader of some basics of localisation, and on the other hand as a prototype of evanescent supersymmetry. We started with a purely bosonic theory which we rewrote using auxiliary fermionic degrees of freedom: indeed in \eqref{eq:SactionDH} the $\psi^\mu$ may be integrated out since $\omega_{\mu\nu}$ is invertible.
This in turn allowed us to make us of the methods of localisation to solve these integrals. This is exactly what we will do in the context of the quantum field theory: starting off with a partition function which only includes bosonic field content, by adding auxiliary degrees of freedom we allow ourselves to use the method of localisation through evanescent supersymmetry.

\subsection{Chern-Simons as an evanescent SUSY gauge theory}
Supersymmmetric Chern-Simons theory was originally written down in \cite{Lee:1990it}. We write down the action for $N=2$ Euclidean supersymmetry on flat space  $M_3=\mathbb R^3$ and arbitrary gauge algebra $\fg$ with invariant inner product $\kappa$. The fields are $A,\sigma,D,\lambda,\hat\lambda$, of which $A$ is a $\fg$-valued 1-form and $\sigma,D$ are $\fg$-valued scalars, and $\lambda,\hat\lambda$ are $\fg$-valued complex 2-component anticommuting spinor fields. The action  reads, in Cartesian coordinates,
\begin{equation}
    S^{N = 2}_\CS[A,\sigma,\lambda,\hlambda,D] 
    = 
   \int \kappa\Big(\frac{1}{2} A\dr A +\frac{1}{3}A^3 +\star(\lambda\hlambda - \sigma D)\Big)
\end{equation}
and we are using Northwest-Southeast conventions for the spinors so $\lambda\hat\lambda=\lambda^\alpha \hat\lambda_\alpha$.

This lagrangian is supersymmetric under the usual $N=2$ Euclidean supersymmetry transformations, where $\zeta$ and $\hat \zeta$ are 2-component complex bosonic spinors:
\begin{subequations}
\begin{align}
    \delta A_\mu &= -\zeta\gamma_\mu\hlambda + \lambda\gamma_\mu\hzeta
    \Big.\\
    \delta\sigma &= -i\zeta\hlambda + i\lambda\hzeta
    \Big.\\
    \delta\lambda &= i\zeta\Big(D + \sF - i\gsD\sigma\Big)
    \Big.\\
    \delta\hlambda &= \phantom{\zeta}i\Big(D + \sF + i\gsD\sigma\Big)\hzeta
    \Big.\\
    \delta D &= \gD_\mu\Big(\zeta\gamma^\mu\hlambda + \lambda\gamma^\mu\hzeta\Big) + i\Big[\sigma,\zeta\hlambda + \lambda\hzeta\Big]
    \Big.
\end{align}
\end{subequations}
with $\Dg$ the covariant derivative w.r.t.\ $G$. Since the supersymmetry multiplet is given in Wess-Zumino gauge, the algebra closes on translations along with compensating gauge transformations: for any field $\varphi$,
\be
\begin{split}
\label{eq:SUSYALGEBRAMASSLESS}
    \{\delta_\zeta,\hat\delta_{\hat\zeta}
\}\varphi=2i\mathcal L_K \varphi -2i\delta^\text{gauge}_\Lambda\varphi\,,\qquad K^\mu\equiv \zeta\gamma^\mu \hat \zeta\,, \quad \Lambda\equiv K^\nu A_\nu +i\zeta\hat\zeta \sigma\,,
\end{split}\ee
and the anticommutators of two hatted or two unhatted supersymmetries vanish:
\be
\{\delta_\zeta,\delta_{\zeta'}\}=\{\hat\delta_{\hat \zeta},\hat\delta_{\hat\zeta'}\}=0\,.
\ee
We display the full algebra in appendix.

We emphasise two points that will be relevant for supersymmetric Manin theory:
\begin{enumerate}
    \item We have $\delta_\zeta^2=\hat\delta_{\hat \zeta}^2=0$. This enables supersymmetric localisation.
    \item This $N=2$ theory is \emph{equivalent} to $N=2$ Chern-Simons theory: $\sigma,D,\lambda,$ and $\hat\lambda$ may be integrated out, leaving only $S_\CS[A]$. Therefore the supersymmetry is \emph{evanescent}. Note, however, that this theory is an exceptional case where evanescent supersymmetry is identical to conventional supersymmetry! The reason is that the field equation is $F=0$, which makes $\delta \lambda$ and $\delta \hat \lambda$ vanish on-shell.
    \item An important corollary of this last point is that the anticommutators of supersymmetry transformations must vanish on-shell, up to gauge transformations, so that the corresponding bosonic symmetries are trivial on $A$. This is indeed the case: for example, anticommutators on $A$ involve its field strength $F$ which vanishes on-shell.
\end{enumerate}
Irrespective of the fact the symmetry is trivial in the above sense, this supersymmetry yielded well-known exact results for certain observables via supersymmetric localisation \cite{kapustin2010exact}, and moreover the equivalence between $N=2$ and $N=0$ Chern-Simons theory has been confirmed exactly, at least for supersymmetric observables \cite{Fan_2019}.

\subsection{$N=2$ evanescent-supersymmetric Manin gauge theory}
Without further ado we display the action for $N=2$ evanescent-supersymmetric Manin theory:
\be
\label{eq:SUSYManinLagrangian}
\begin{split}
S_{N = 2} =\\ \frac{1}{2}\int_{M_3}\eta\bigg(k\Big(\bbA\dr\bbA +\tfrac{2}{3}\bbA^3\Big) +   g^2 \bbA\star M\bbA\bigg)+ k\int_{M_3}\star\eta\bigg(\bblambda\hat{\bblambda}-\bbsigma\mathbb D\bigg) + \frac{1}{2}\int_{M_3}\eta\Big(g^2\bbsigma\star M\bbsigma\Big)
\end{split}
\ee
This action may be seen as a mass deformation of the $N=2$ Chern-Simons theory described above, where now the fields $\mathbb A,\bbsigma,\mathbb D,\bblambda,\hat\bblambda$ take values in $\frak d$.

One may convince oneself that the supersymmetry $\delta_\text{CS}$  of $N=2$ Chern-Simons theory may be deformed into an invariance of the SUSY Manin theory action $S_{\text{N=2}}$: indeed $\delta_\text{CS} S_{\text{N=2}}=\int \eta(\mathbb A {\star} M\delta_\text{CS}\mathbb A+ \bbsigma {\star} M\delta_\text{CS}\bbsigma)$ which is linear in $\bblambda,\hat\bblambda$ so it may be cancelled using a new variation $\delta'$ satisfying $M\delta'\bblambda=M\delta'\hat\bblambda=0$. In fact, if $\delta'$ annihilates all bosons, it is determined uniquely this way. The complete transformations (on flat Euclidean space, with supersymmetry parameters $\zeta,\hat \zeta$ as before) are
\begin{subequations}
\label{eq:SUSYtransfsManin}
\begin{align}
    \delta\bbA_\mu &= -\zeta\gamma_\mu\bbhlambda + \bblambda\gamma_\mu\hzeta
    \Big.\,,\\
    \delta\bbsigma &= -i\zeta\bbhlambda + i\bblambda\hzeta
    \Big.\,,\\
    \delta\bblambda &= i\zeta\Big(\bbD + \bbsF - i\gsD\bbsigma {+ \frac{g^2}{k}M(i\bbsA - \bbsigma)}\Big)
    \Big.\,,\\
    \delta\bbhlambda &= \phantom{\zeta}i\Big(\bbD + \bbsF + i\gsD\bbsigma {+ \frac{g^2}{k}M(i\bbsA - \bbsigma)}\Big)\hzeta
    \Big.\,,\\
    \delta\bbD &= \gD_\mu\Big(\zeta\gamma^\mu\bbhlambda + \bblambda\gamma^\mu\hzeta\Big) + i\Big[\bbsigma,\zeta\bbhlambda + \bblambda\hzeta\Big]
    \Big.\,.
\end{align}
\end{subequations}
Since $M^2=0$ the new terms relative to the Chern-Simons SUSY are indeed annihilated by $M$.

Much like in $N=2$ Chern-Simons theory, the SUSY here is indeed evanescent: $\bbsigma,\mathbb D,\bblambda,\hat \bblambda$ may all be integrated out at the same time, leaving only the original Manin theory action.

\paragraph{Algebra.} The anticommutators of the above transformations are
\begin{subequations}
\label{eq:AnticommutatorsManin}
\begin{align}
    \{\delta_\zeta,\hdelta_\hzeta\}\bbA_\mu &= 2iK^\nu\Big(\bbF_{\nu \mu}  {+ \frac{g^2}{k}\epsilon_{\nu\mu\rho}M\bbA^\rho}\Big) + 2\zeta\hzeta\gD_\mu\bbsigma
    \Big.\,,\\
    \{\delta_\zeta,\hdelta_\hzeta\}\bbsigma &= 2iK^\nu\gD_\nu\bbsigma
    \Big.\,,\\
    \{\delta_\zeta,\hdelta_\hzeta\}\bblambda &= 2\big(iK^\nu\gD_\nu - \zeta\hzeta\ad\bbsigma\big)\bblambda  {+ \frac{g^2}{k}2M\bblambda(\zeta_\Cj\otimes\hzeta_\Cj)}
    \Big.\,,\\
    \{\delta_\zeta,\hdelta_\hzeta\}\bbhlambda &= 2\big(iK^\nu\gD_\nu - \zeta\hzeta\ad\bbsigma\big)\bbhlambda  {- \frac{g^2}{k}2(\zeta_\Cj\otimes\hzeta_\Cj)M\bbhlambda}
    \Big.\,,\\
    \{\delta_\zeta,\hdelta_\hzeta\}\bbD &= 2\big(iK^\nu\gD_\nu - \zeta\hzeta\ad\bbsigma\big)\Big(\bbD  {- \frac{g^2}{k}M\bbsigma}\Big)  {-\frac{g^2}{k}2\big(\zeta\hzeta\gD_\nu + iK_\nu\ad\bbsigma\big)M\bbA^\nu}
    \Big.\,,\\
        (\delta_\zeta)^2 &= (\delta_{\hat \zeta})^2 = 0\,.
\end{align}
\end{subequations}
where $\zeta,\eta$ and $\hat\zeta,\hat \eta$ are pairs of constant bosonic spinors of the same R-charge.

For $M=0$ we recognise the usual $N=2$ SUSY algebra given in \eqref{eq:SUSYALGEBRAMASSLESS}: the $M=0$ terms are organised into infinitesimal translations by $K^\mu\equiv \hzeta\gamma^\mu\zeta$ and gauge transformations with parameter $ K^\nu \mathbb A_\nu + i \zeta\hat\zeta\bbsigma$, which is valued in $\frak d$. The latter compensate for the gauge choice (Wess-Zumino gauge). Ignoring these gauge transformations we thus see that the supersymmetries square to translations, as expected.

What happens when $M\neq 0$? Recall that the gauge symmetry under $\frak d$ is explicitly broken to the subalgebra $\frak g\into\fd$. Therefore the gauge parameter $ K^\nu \mathbb A_\nu + i \zeta\hat\zeta\bbsigma$ includes not just $\frak g$-valued gauge transformations, which should be ignored as before, but also $\tfg$-valued transformations which are no longer gauge symmetries, but trivial symmetries (on-shell vanishing). The new $M$-dependent contributions to the algebra also ensure that these anticommutators vanish on-shell, which is required for consistency. The price to pay is that the transformations \eqref{eq:SUSYtransfsManin} \emph{no longer square to translations}, and the complete algebra of anticommutators has not been determined yet. 

\paragraph{Remarks.} 
\begin{enumerate}
    \item The supersymmetric action \eqref{eq:SUSYManinLagrangian} provides a supersymmetrisation for all 3D gauge theories that admit a Manin theory formulation. This includes Yang-Mills theory as shown in section \ref{sec:examples}. We emphasise that the action \eqref{eq:SUSYManinLagrangian} --- which enjoys evanescent $N=2$ supersymmetry --- is inequivalent to the usual $N=2$ super Yang-Mills action; this is another reason why it is convenient to have the adjective `evanescent' to distinguish the two situations. (In fact one could likely introduce an evanescent supersymmetry on top of conventional $N=2$ supersymmetry! We choose not to, though.)

    \item For the \trd{}  (see section \ref{sec:3rdway}) the action and supersymmetry transformations were originally found in the second author's Master's thesis \cite{Decavel:2023rmi}  (they were not derived from Manin theory).
\end{enumerate}

\subsubsection{SUSY on curved space; the round 3-sphere}
\label{sec:SUSYcurved}
For the purposes of localisation we will generalise the evanescent supersymmetry transformations \eqref{eq:SUSYtransfsManin} to curved space.  (The action remains \eqref{eq:SUSYManinLagrangian}.) For this, we will no longer assume that the supersymmetry parameters $\zeta,\hat \zeta$ are constant. 

We will in fact assume that $\zeta, \hat\zeta$ are \emph{conformal Killing spinors}\footnote{Their spinor bilinears are indeed conformal Killing vectors:
$\nabla_{(a}K_{b)} = \nabla_{(a}\zeta\gamma_{b)}\hat\zeta + \zeta\gamma_{(a}\nabla_{b)}\hat\zeta = \big(\zeta'\hat\zeta + \zeta\hat\zeta{'}\big)\eta_{ab}\,.$
}, i.e.
\begin{align}
&
\begin{gathered}
\exists\zeta': \nabla_\mu\zeta = \zeta'\gamma_\mu
\Big.\\
\exists\zetah{'}: \nabla_\mu\zetah = \gamma_\mu\zetah{'}
\Big.
\end{gathered}
&
&\Rightarrow
&&
\begin{gathered}
3\zeta' = \zeta\nablasl
\Big.\\
3\zetah{'} = \nablas\zetah
\Big.
\end{gathered}
&
&\Leftrightarrow
&&
\begin{aligned}
\zeta\big(\nablasl\gamma_\mu + 3\gamma_\mu\nablasl\big)\phantom{\zetah} &= 0
\Big.\\
\big(\gamma_\mu\nablas + 3\nablas\gamma_\mu\big)\zetah &= 0
\Big.
\end{aligned}
\end{align}
If $\zeta,\hat \zeta$ are conformal Killing, the transformations below are invariances of the action \eqref{eq:SUSYManinLagrangian} and satisfy $(\delta_\zeta)^2=(\delta_{\hat \zeta})^2=0$; the coefficients of the extra terms relative to \eqref{eq:SUSYtransfsManin} are fixed uniquely by these requirements:
\begin{subequations}
\label{eq:ManinSUSYCurved}
\begin{align}
\delta \mathbb A_\mu &= -\zeta\gamma_\mu\hat\bblambda+ \bblambda\gamma_\mu\zetah
\bigg.\\
\delta\bbsigma &= -i\zeta\hat \bblambda + i\bblambda\zetah
\bigg.\\
\delta\bblambda &= i\zeta\bigg(\mathbb D + \slashed{\mathbb F} - i\Dgs\bbsigma - \frac{2i}{3}\nablasl\bbsigma + \frac{g^2}{k}M(i\slashed{\mathbb A}
 - \bbsigma)\bigg)
\bigg.\\
\delta\hat\bblambda &= \phantom{\zeta}i\bigg(\mathbb D + \slashed{\mathbb F} + i\Dgs\bbsigma + \frac{2i}{3}\bbsigma\nablas + \frac{g^2}{k}M(i\slashed{\mathbb A} - \bbsigma)\bigg)\zetah
\bigg.\\
\delta \mathbb D &= \Dg_\mu\big(\zeta\gamma^\mu\hat\bblambda + \bblambda\gamma^\mu\zetah\big) + i\big[\bbsigma,~\zeta\hat\bblambda + \bblambda\zetah\big] - \frac{2}{3}\zeta\nablasl\hat\bblambda - \frac{2}{3}\bblambda\nablas\zetah
\bigg.
\end{align}
\end{subequations}
These transformations are similar to those of reference \cite{Closset:2012ru}, except in the presence of the $M$ terms in the variations of the fermions. (Also our conventions do not exactly match theirs.)

\paragraph{Supersymmetric Wilson loops.} A Wilson loop in the representation $\cal R$ is written 
\begin{equation}
\label{eq:susywilsonline}
W = \operatorname{Tr}_{\gR}\mathcal P\exp\oint_\gamma\dr\tau\Big[-\dot x^\mu \mathbb A_\mu + i|\dot x|\bbsigma\Big]
\end{equation}
where the factor of $i$ ensures that the supersymmetry preservation conditions
\be
\zeta\big(\dot x^\mu\gamma_\mu + |\dot x|\big) = 0\,,\qquad \& \qquad
\big(\dot x^\mu\gamma_\mu + |\dot x|\big)\hat \zeta = 0
\ee
are such that they admit non-zero solutions for at least one of $\zeta,\hat \zeta$. (We address reality issues later.) Note that the metric $g_{\mu\nu}$ makes its appearance via the curved gamma matrix and via $|\dot x|\equiv\sqrt{g_{\mu\nu}\dot x^\mu \dot x^\nu}$. These conditions are identical to those for undeformed $N=2$ supersymmetry, owing to how the transformations for $\mathbb A$ and $\bbsigma$ are not affected by the deformation involving $M$.

Fixing a specific spinor $\zeta$ with conjugate $\zeta^\dagger$ leads to the usual solution of the supersymmetry conditions, with the loop defined by the spinor bilinear $K$:
\be
\dot x=-K\,,\quad K^\mu\equiv \zeta\gamma^\mu \zeta^\dagger\,.
\ee
The corresponding Wilson loop is  invariant under $\delta_\zeta$.

\paragraph{The round $S^3$.} For the round 3-sphere of radius $\ell$, the covariant derivatives of spinors take the form (see appendix \ref{appendix:sphere})
\be
\label{eq:sphereSpinorD}
\nabla\zeta = \dr\zeta - \frac{i}{2\ell}\zeta\gamma \,,\quad
\nabla\hat \zeta = \dr\hat \zeta + \frac{i}{2\ell}\gamma\hat \zeta
\ee
Constant spinors are therefore conformal Killing; in fact, they are Killing spinors. and the spinor bilinear $K$ is a Killing vector. In that case the transformations \eqref{eq:ManinSUSYCurved} specialise to
\begin{subequations}
\label{eq:MANINSUSY3SPHERE}
\begin{align}
\delta \mathbb A_\mu &= -\zeta\gamma_\mu\hat\bblambda+ \bblambda\gamma_\mu\zetah
\bigg.\\
\delta\bbsigma &= -i\zeta\hat \bblambda + i\bblambda\zetah
\bigg.\\
\delta\bblambda &= i\zeta\bigg[(\mathbb D-\bbsigma/\ell) + \slashed{\mathbb F} - i\Dgs\bbsigma  + \frac{g^2}{k}M(i\slashed{\mathbb A}
 - \bbsigma)\bigg]
\bigg.\\
\delta\hat\bblambda &= \phantom{\zeta}i\bigg[(\mathbb D-\bbsigma/\ell) + \slashed{\mathbb F} + i\Dgs\bbsigma  + \frac{g^2}{k}M(i\slashed{\mathbb A} - \bbsigma)\bigg]\zetah
\bigg.\\
\delta \mathbb D &= \Dg_\mu\big(\zeta\gamma^\mu\hat\bblambda + \bblambda\gamma^\mu\zetah\big) + i\big[\bbsigma,~\zeta\hat\bblambda + \bblambda\zetah\big] +i(\zeta\hat\bblambda-\bblambda\hat\zeta)/\ell
\bigg.\,.
\end{align}
\end{subequations}

\subsubsection{Parity and Euclidean unitarity (reflection-positivity)}
\label{secParity}
\paragraph{Parity/reflection invariance.} Chern-Simons theories generically break parity invariance. However, we have already seen that Manin theory can give rise to familiar parity-invariant gauge theories, like Yang-Mills. There is a way to see parity-invariance \emph{a priori}, however. This is essentially the same as the parity-preservation mechanism of the \trd{} theory \cite{Arvanitakis:2015oga}: we augment the action of a space(time) orientation-reversing diffeomorphism with a transformation of colour indices.

Specifically: given a \emph{parity involution} $J :\fd\to \fd$ which by definition satisfies
\be
\label{eq:parityinvolutiondef}
J^2=1\,,\quad \eta(J\mathbbm{x},J\mathbbm{y})=-\eta(\mathbbm{x},\mathbbm{y})\,,\quad \text{and}\quad J[\mathbbm{x},\mathbbm{y}]=[J\mathbbm{x},J\mathbbm{y}]\,,
\ee
we may combine it with some orientation-reversing diffeomorphism $r:M_3\to M_3$ with $r^2=1$ to obtain a parity transformation $\mathcal P$
\be
\label{eq:ParityonA}
\bbA\to \mathcal P\bbA\equiv J r^\star\bbA
\ee
(with $r^\star$ the pullback by $r$) which preserves the sign of the Chern-Simons action:
\be
\begin{split}
\int_{M_3}\eta(\bbA \dr \bbA)\to -\int_{M_3}\eta(J\bbA\,\dr J\bbA)=\int_{M_3}\eta(\bbA \,\dr \bbA)\,,\\
\int_{M_3}\eta(\bbA^3)= \frac{1}{2}\int_{M_3}\eta(\bbA,[\bbA,\bbA])\to -\frac{1}{2}\int_{M_3}\eta(J\bbA,[J\bbA,J\bbA])=+\int_{M_3}\eta(\bbA^3)\,.
\end{split}
\ee
(We note here that $J$ also maps solutions of the Chern-Simons field equations to themselves, since $\bbF \to r^\star(\dr J\bbA+\frac{1}{2}[J\bbA,J\bbA])=r^\star J\bbF$.)

A parity involution exists for all Manin pairs coming from Lie quasi-bialgebras of coboundary type  \eqref{eq:coboundary}. This involution is given by
\be
\label{eq:parityinvolution}
JT_a=T_a\,,\qquad J\tT^a=-\tT^a +2 R^{ab}T_b\,.
\ee
It can be verified via direct calculation using the explicit expressions for the brackets in $\fd$ \eqref{eq:manindoublecommutationrelations} that this $J$ satisfies all conditions \eqref{eq:parityinvolutiondef}. Moreover it is trivial to check that 
\be
JM=-MJ=M
\ee
for this $J$ so that the bosonic Manin theory action \eqref{eq:ManinLag} is parity even under the transformation \eqref{eq:ParityonA} for any orientation reversing isometry $r$ of $M_3$.

The supersymmetric Manin theory defined by the action \eqref{eq:SUSYManinLagrangian} is \emph{also} parity-invariant. For this we specialise to Euclidean space $M_3=\mathbb R^3$ with the flat metric in Cartesian coordinates for the purposes of illustration. Under the reflection e.g.~$r(x^1,x^2,x^3)=(x^1,x^2,-x^3)$, the transformations of all fields are defined as
\be
\begin{split}
    \mathcal P\bbA =r^\star J\bbA\,,\quad \mathcal P \bbsigma=-r^\star J\bbsigma\,,\quad \mathcal P\mathbb{D}= r^\star J\mathbb D\,,\\
\mathcal P\bblambda=- J\bblambda \gamma^3\,,\qquad \mathcal P\hat{\bblambda}= J \gamma^3 \hat{\bblambda}\,.
\end{split}
\ee
These may be determined uniquely by demanding that the SUSY transformations \eqref{eq:SUSYtransfsManin} are such that 
\be
\mathcal P\delta_\zeta\bullet=\delta_{\mathcal P\zeta}\mathcal P\bullet
\ee
for all fields, where $\zeta$ and $\hat \zeta$ transform analogously to $\bblambda$ and $\hat \bblambda$. (The transformation of the spinors may be motivated as follows: given a vector $V$ in $\mathbb R^3$ and a unit vector $u$ defining a reflection along $u$'s perpendicular plane, the transformation of $V$ may be written via $\slashed{V}$ as $\slashed{V}\to-\slashed{u}\slashed{V}\slashed{u}$. Therefore expressions like $\bblambda \gamma_\mu \hat \zeta$ transform as vectors under e.g.~reflections about the $12$-plane when the spinors transform as stated above.)

Finally, we remark that it appears to be impossible to arrange that the Manin action is \emph{odd} under parity (instead of even) at least for simple gauge algebras, because of the difficulty of constructing a suitable involution on $\fd$.

\paragraph{Euclidean unitarity: reflection positivity.} We will consider unitarity at the  level of the path integral. The  brief discussion by Witten \cite[section 2]{Witten:1989ip} for reality conditions in complex Chern-Simons theory is relevant to us.

In both Lorentzian and Euclidean signatures unitarity entails  probabilities, partition functions, \dots, calculated in the quantum theory are non-negative. For Lorentzian signature the requirement is that the integrand $e^{iS}$ of the path integral lies in $\mathrm{U}(1)$, which implies the Lorentzian action $S$ must be real.

We are more interested in Euclidean unitarity. The requirement is the following: whenever we reverse spacetime orientation, the integrand should be complex conjugated. When this is the case, we may formally see that e.g.~the partition function on $\mathbb R^3$ is non-negative via a cutting and gluing argument\footnote{For more on cutting and gluing for non-topological QFTs we refer to work by Dedushenko \cite{Dedushenko:2018aox}.}. To see this, cut $\mathbb R^3$ along the 2-plane $D$ defined by $x^3=0$, so $\mathbb R^3=\mathbb R^3_+\cup_D\mathbb R^3_-$. Then if we define the wavefunctionals $\Psi_\pm$ on each half-space $\mathbb R^3_\pm$ defined by the boundary conditions $\phi|_D=\pi$ of all fields $\phi$\,,
\be
\Psi_\pm(\pi)\equiv \int_{\phi|_D=\pi}\mathcal D\phi \;\exp(-S_{\mathbb R^3_\pm})\,,
\ee
($S_{\mathbb R^3_\pm}$ being the action integrated over the respective half-space) we may express the partition function as their overlap by integrating over all possible boundary values:
\be
Z=\int\mathcal D\phi \;e^{-S_{\mathbb R^3}}=\int\mathcal D\phi\; e^{-S_{\mathbb R^3_+}}e^{-S_{\mathbb R^3_-}}=\int \mathcal D\pi\;\Psi_+(\pi)\Psi_-(\pi)\,.
\ee
Since the reflection that sends $x^3\to -x^3$ maps $\mathbb R_-^3\to\mathbb R_+^3$, if $S_{\mathbb R_-^3}$ goes to $S_{\mathbb R_+^3}^\ast$ --- the complex conjugate of $S_{\mathbb R_+^3}$ --- we see the right-hand side is, insofar as the path integral over $\pi$ is defined, non-negative. (This argument may be viewed as inserting 1 in the matrix element $Z=\langle 0|0\rangle$.)

\paragraph{The reality condition on fields and a subtlety involving Wilson loops.}

As established previously, the supersymmetric Manin action \eqref{eq:SUSYManinLagrangian} is reflection-invariant. Therefore the simplest reality condition is one where the action is real. Let us discuss the real case. The obvious choice is to take $\mathbb A,\bbsigma,\mathbb D$ to be real $\fd$-valued fields, and to assume the constants $k$ and $g^2$ are real. However, there arises a reality issue with Wilson loops, which to our knowledge has not been discussed before. Given the transformations \eqref{eq:SUSYtransfsManin}, SUSY Wilson loops  take the form \eqref{eq:susywilsonline} whose integrand is $-\dot x^\mu \mathbb A_\mu + i|\dot x|\bbsigma$
with $|\dot x|=\sqrt{g_{\mu\nu} \dot x^\mu \dot x^\nu}$.  If $\bbsigma$ is real-valued, the integrand takes values in the complexification of the Lie algebra $\fd$. This may be cured if we assume that $\bbsigma$ is pure imaginary, instead; reflection positivity then demands that $\mathbb D$ is also pure imaginary. We then have two possible reality conditions: one where $\bbsigma,\mathbb D$ are both $\fd\otimes i\mathbb R$-valued, and one where they are both $\fd$-valued. ($\mathbb A$ must be $\fd$-valued in both cases.) Unfortunately the boundary condition where $\bbsigma$ is imaginary-valued is incompatible with the localisation argument given later  (it breaks the calculation leading to \eqref{eq:localisingfunctionalExplicit}).  \textbf{We thus employ the reality condition where $\mathbb A,\bbsigma$ and $\mathbb D$ are all  $\fd$-valued.} 

We were not able to find any other unitary branches, since we have not found a prescription for reflection such that the action is odd. This is in contrast to $G_{\mathbb C}$ Chern-Simons theory, where there is a second unitary branch, which in our conventions would entail pure imaginary $k$ \cite{Witten:1989ip}; this branch is not compatible with the Manin mass term.

Finally, we point out that while the parity involution $J$ of \eqref{eq:parityinvolution} exists for coboundary Lie quasi-bialgebras, it may also exist under more general circumstances. For example, when $\tfg$ may (and is chosen to) be  such that $(\tfg\into\fd,\eta)$ is another Manin pair, the Lie quasibialgebra structure on $\tfg$ need not be coboundary, but the involution $J$ of \eqref{eq:parityinvolution} that is defined relative to its `dual' Lie quasibialgebra $\fg$ still works. An explicit example is given by the Lie quasibialgebra defining Freedman-Townsend theory (sec.~\ref{sec:freedmantownsend}).

\section{Localisation of Manin gauge theory}
\subsection{BRST gauge fixing}

Manin theory is gauge invariant under $G$-valued gauge transformations --- the Lie group integrating $\frak g$. Therefore  the whole gang of ghosts, antighosts, and friends will be valued in $\frak g$ (as opposed to the bigger algebra $\frak d$):
\be
c=c^a T_a\,, \qquad \bar c=\bar c^a T_a\,,\qquad b= b^a T_a\,.
\ee
The BRST transformations $\delta_{\text{BRST}}$ take the standard form, keeping into account that this is a right differential like the supersymmetry is: 
\begin{subequations}
\begin{align}
    \delta_{\text{BRST}}\bbA_\mu &= -\gD_\mu c\,,
    &
    \delta_{\text{BRST}}c &= -c^2\,,
    \\
    \delta_{\text{BRST}}\bbsigma &= [c,\bbsigma]\,,
    &
    \delta_{\text{BRST}}\bc &= ib\,,
    \\
    \delta_{\text{BRST}}\bblambda &= -\{c,\bblambda\}\,,
    &
    \delta_{\text{BRST}}b &= 0\,,
    \\
    \delta_{\text{BRST}}\hat\bblambda &= -\{c,\hat\bblambda\}\,,
    \\
    \delta_{\text{BRST}}\bbD &= [c,\bbD]\,.
\end{align}
\end{subequations}
The BRST differential $\delta_{\text{BRST}}$ anticommutes with the supersymmetries\footnote{This requires a short calculation analogous to the one for conventional $N=2$ supersymmetry. A key point in the present case is the identity $M[x,\mathbbm y]=[x,M\mathbbm y]$ valid for all $\mathbbm y\in \fd$ and all $x\in \fg$ whenever $M$ satisfies \eqref{eq:Midentities}.} $\delta_\zeta$ and $\hat \delta_{\zeta}$ (assuming $\delta_{\BRST}$ annihilates the ghost sector); it is also nilpotent: $\delta_{\text{BRST}}^2=0$.

For the gauge-fixing action we need to pick a $\tfg$ complementary to $\frak g$ in $\frak d$. In the basis adapted to this split (used already in \eqref{eq:manindoublecommutationrelations}) the gauge-fixing action for Lorentz gauge reads (where $\nabla$ is the Levi-Civita derivative and $ D_A\equiv \dr +[A,\bullet]=\mathcal D-[\tilde A,\bullet]$)
\begin{equation}
\label{BRST action}
\begin{split}
    S_\BRST[\bbA,b,c,\bc]
    &=
    -\delta_{\text{BRST}}\int\star\kappa_{ab}\nabla^\mu A_\mu^a\bc^b
    = -\int\star\kappa_{ab}\bigg(ib^a\nabla^\mu A_\mu^b + \pd_\mu\bc^a\gD^\mu c^b\bigg)
    \\
    &= -\int\star\kappa_{ab}\bigg(ib^a\nabla^\mu A_\mu^b + \pd_\mu\bc^aD^\mu_{A} c^b {- \tf^{cb}{_d}\tA_c^\mu\pd_\mu\bc^ac^d}\bigg)\,.
\end{split}
\end{equation}
The difference with conventional (formulations of) gauge theory lies only in the last term that depends on $\tilde f$. We may select $\tfg$ such that $\tilde f=0$ whenever $\fg$ is of coboundary type as a Lie quasibialgebra, see \eqref{eq:coboundary} and discussion in that section. In particular we may choose this for the Manin formulations of Yang-Mills theory and Third Way theory. Since coboundary-type quasibialgebras are the ones for which we have established reflection-positivity (see \eqref{eq:parityinvolution}), \textbf{we will henceforth specialise to $\tilde f^{ab}{}_c=0$}.

In this situation we see explicitly, given the split $\fd=\fg+\tfg$, that the field $\mathbb A$ splits into a gauge field $A$ for $\frak g$ and a matter field $\tilde A$:
\be
\delta_\text{BRST} A_\mu = -\pd_\mu c -[A_\mu,c]\,,\qquad \delta_{\text{BRST}}\tilde A=-[\tilde A,c]\,.
\ee

\subsection{Localisation on $S^3$}
We will localise the gauge-fixed theory with respect to the (right-)differential
\be
\label{eq:Qdef}
Q\equiv \delta_{\text{BRST}}+\delta_\zeta
\ee
for $\delta_\zeta$ the supersymmetry transformation given in \eqref{eq:MANINSUSY3SPHERE} for the round 3-sphere with $\hat \zeta=0$ and $\zeta$ some fixed constant nonzero spinor. The action of the gauge-fixed theory ought to be $Q$-invariant, which motivates changing the gauge fixing action from its conventional form $S_\text{BRST}$ above to
\be
Q V_\text{g.f.}\equiv Q \Big(-\int\star\kappa_{ab}\nabla^\mu A_\mu^a\bc^b\Big)=S_\text{BRST}+\delta_\zeta \Big(-\int\star\kappa_{ab}\nabla^\mu A_\mu^a\bc^b\Big)\,,
\ee
so that the e.g.~partition function is the integral
\be
\label{partitionfunction}
Z= \int\mathrm D\bbA\, \mathrm D\bblambda\, \mathrm D \hat\bblambda\, \mathrm D\bbsigma\, \mathrm D \mathbb D\, \mathrm Dc\,\mathrm D \bar c\,\mathrm D b\, \exp(S_{N=2}+QV_\text{g.f.})
\ee
for $S_{N=2}$ the evanescent supersymmetric Manin theory action \eqref{eq:SUSYManinLagrangian}. Localisation will, of course, work for the correlation functions of any collection of $Q$-closed observables; in particular, for any observables which are SUSY- and BRST-invariant.

We will assume there are no supersymmetry or BRST anomalies, so that we may integrate by parts in the path integral with respect to $Q$. We may then freely subtract from  the action the following $Q$-exact term 
\be
Q\Psi\equiv Q\int \star \mathcal E_{AB}(\bblambda^A(\delta_\zeta \bblambda^B)^\dagger)
\ee
where $\mathcal E:\frak d\times \frak d\to \mathbb R$ is some positive semidefinite and $\ad \frak g$ invariant symmetric form; such $\mathcal E$ always exist by an averaging argument as long as $G$ is compact. If $\tilde f=0$ we may verify directly, for example, that $\mathcal E(T_a,T_b)=\kappa_{ab}\,,\mathcal E(\tilde T^a,\tilde T^b)=\kappa^{ab}$ works.

Since $Q$ is a right differential, we have \begin{equation}
\label{Localising functional}
Q\Psi = \int\star\gE\Big[(\delta_\zeta\bblambda)(\delta_\zeta\bblambda)^\dagger + \bblambda\delta_\zeta(\delta_\zeta\bblambda)^\dagger\Big]
\end{equation}
which evaluates to
\begin{equation}
\label{eq:localisingfunctionalExplicit}
\begin{aligned}[b]
Q\Psi
=
(\zeta\zeta^\dagger)\int\star\gE\bigg[\Big\|\star\bbF - \gD\bbsigma + \frac{g^2}{k}M\bbA\Big\|^2 + \Big(\mathbb D-\big(\frac{1}{\ell}+\frac{g^2}{k} M\big)\bbsigma\Big)^2
\\
+ 2\bblambda\Big(-i\gsD + \ad\bbsigma + \frac{g^2}{k}M + \frac{1}{2\ell}\Big)\bbhlambda
\end{aligned}
\bigg]
\end{equation}
This is positive-definite: the first term is the norm squared  of the vector field ${\star}\mathbb F-\mathcal D\bbsigma + M\mathbb A$ with respect to the 3-sphere metric,  while the second term is manifestly a square. (Here we have employed the reality condition on fields that establishes reflection positivity, namely that the fields all be real-valued.) 

We then modify the action to
\be
\label{eq:totalActionLoc}
S_{N=2}+QV_\text{g.f.}-t^2Q\Psi
\ee
inside the path integral, in standard localisation argument fashion; in the limit $t\to \infty$ the only configurations which contribute to the path integral must solve the equations
\begin{subequations}
\label{localisationequations}
\begin{align}
\label{Alocequation}
{\star}\mathbb F-\mathcal D\bbsigma +\frac{g^2}{k} M\mathbb A=0\,,\\
\label{Dlocequation}
\mathbb D-\Big(\frac{1}{\ell}+\frac{g^2}{k} M\Big)\bbsigma=0\,.
\end{align}
\end{subequations}
We have thus shown that \textbf{all Manin theories localise onto \eqref{localisationequations}} subject only to the assumptions that
\begin{itemize}
    \item the group $G$ integrating the gauge algebra $\fg$ is compact; and
    \item the differential $Q$ is an invariance of the path integral measure (i.e.~there are no supersymmetry or BRST anomalies).
\end{itemize}

\paragraph{The 1-loop determinant.} It is straightforward to write down an expression for the fluctuation determinants around the localisation locus. We expand the fields $\{\bbA,\bbsigma,\bbD,\bblambda,\hat\bblambda\}$ onto `moduli' $\{\bar \bbA,\bar\bbsigma,\dots\}$ and fluctuations $\{\bbA',\bbsigma',\dots\}$ as follows:
\begin{align}
\bbA&=\bar \bbA + t^{-1}\bbA'\,,\\
\bbsigma&=\bar \bbsigma  +  t^{-1} \bbsigma'\,,\\
\vdots&\nonumber
\end{align}
where $t$ is the same constant appearing in the total action \eqref{eq:totalActionLoc} (which includes gauge-fixing and localisation terms). The bosonic moduli are constrained to satisfy the localisation conditions \eqref{localisationequations} and the fermionic ones $\bblambda,\hat\bblambda$ are set to zero: therefore
\be
\bar \bbD=\left(\frac{1}{\ell} +\frac{g^2}{k}M\right)\bar\bbsigma\,,\qquad \bar\bblambda=\bar{\hat\bblambda}=0\,.
\ee
We do nothing to the gauge sector fields $\{c,\bar c,b\}$; in particular $\bar c$ remains the antighost. We should also think of $c,\bar c$ as `fluctuation' fields.

We perform the path integral $\int \mathrm Db$ \emph{before} taking the limit $t\to \infty$, as we should. Then the term $i b\nabla \cdot A$ from $S_\text{BRST}$ inside the gauge-fixing term $V_\text{g.f.}$ in the total action \eqref{eq:totalActionLoc} enforces the gauge $\nabla \cdot \bar A + t^{-1} \nabla \cdot A'=0$ for any value of $t$; thus both the modulus $\bar A$ and fluctuation $A'$ are gauge-fixed.\footnote{This procedure is thus slightly different from that of e.g.~Kapustin et al.~\cite{kapustin2010exact}. Our approach has the advantage that the localising functional is positive-definite, whereas including a gauge-fixing term in $\Psi$ would produce a contribution $i b \nabla\cdot \bar A$.}  The total action \eqref{eq:totalActionLoc} has a smooth limit $t\to \infty$ which is the sum
\be
\bar S_{N=2} + S_\text{1-loop}
\ee
where $\bar S_{N=2}$ is \eqref{eq:SUSYManinLagrangian} evaluated on the moduli and $S_\text{1-loop}$ is calculated easily via $t$-power-counting to be
\be
\begin{aligned}[b]
S_\text{1-loop}=-\int \star\kappa (\nabla_\mu \bar c D_{\bar A}^\mu c)\\
-(\zeta\zeta^\dagger)\int\star\gE\bigg(\Big\|\star\bar{\mathcal D} \bbA' - \bar{\mathcal D}\bbsigma' +[\bar\bbsigma,\mathbb A'] + \frac{g^2}{k}M\bbA'\Big\|^2 + \Big(\mathbb D'-\big(\frac{1}{\ell}+\frac{g^2}{k} M\big)\bbsigma'\Big)^2
\\
+ 2\bblambda'\Big(-i\bar\gsD + \ad\bar\bbsigma + \frac{g^2}{k}M + \frac{1}{2\ell}\Big)\bbhlambda'\bigg)\,.
\end{aligned}
\ee 
Barred quantities are evaluated with the moduli fields, so e.g.~$D_{\bar A}\equiv \dr + [\bar A,\bullet], \bar {\mathcal D}\equiv \dr + [\bar \bbA,\bullet]$, and $(\zeta\zeta^\dagger)$ is an arbitrary positive normalisation constant that may be set to 1 by rescaling the spinor. 

The localised expression for the partition function $Z$ is, therefore,
\be
\label{eq:localisedpartitionfunction}
Z= \int_{\mathcal M}\mathrm D\bar\bbA\,  \mathrm D\bar\bbsigma\,\delta(\nabla\cdot \bar A) Z_\text{1-loop} \exp(\bar S_{N=2})
\ee
for $\mathcal M$ the locus of solutions to the localisation equation \eqref{Alocequation}, where $Z_\text{1-loop}$ is the 1-loop determinant
\be
\label{eq:1loopdeterminant}
Z_\text{1-loop}=\int \mathrm D \bbA'\,\mathrm D \bbsigma'\, \mathrm D \bbD'\, \mathrm D \bblambda'\, \mathrm D\hat\bblambda'\, \mathrm D c\, \mathrm D\bar c \, \delta(\nabla\cdot  A')\exp{S_\text{1-loop}}\,.
\ee
The same argument works to localise any $Q$-invariant observable, e.g. a SUSY Wilson loop.

We will not attempt to evaluate $Z_\text{1-loop}$ in this work, but we have more to say about the integral over the moduli space $\mathcal M$.

\subsection{Resolving the localisation locus $\mathcal M$}
Henceforth we remove the bars over the moduli fields $\bar \bbA,\bar \bbsigma,\dots$ since we no longer discuss the fluctuation fields $\bbA',\bbsigma'$ in the 1-loop determinant.

The integration over the field $\mathbb D$ is eliminated due to \eqref{Dlocequation} in all cases. However at this stage the localisation is more complicated than for Chern-Simons theory: the presence of the $M\bbA$ term in \eqref{Alocequation} would appear to suggest, a priori, that the integration over $\bbsigma$ is infinite-dimensional. This would be problematic insofar as it means that we have replaced the infinite-dimensional integral --- over $\bbA$, for the $N=0$ theory --- with an equivalent, yet even more infinite-dimensional integral.

That may well be the case for arbitrary Manin pairs. Nevertheless, we may try to integrate out $\bbsigma$ in favour of a finite-dimensional zeromode integral in specific cases of physical interest, including the Third Way theory and Yang-Mills theory. We treat each case separately.

\subsubsection{Third Way theory}
This is the case of the Manin pair whose double $\fd$ is the (commuting) direct sum of Lie algebras $\fd=\fg\oplus \fg$, with the maximally isotropic subalgebra $\fg$ embedded as the diagonal as summarised in section \ref{sec:3rdway}. The key feature of this Manin pair is that the double $\fd$ integrates to a compact Lie group $\mathbb D$, which is (a cover of) $G\times G$. (Since the scalar field $\mathbb D$ has already been integrated out there is no notational clash.) Therefore the form $\mathcal E$ employed above may be chosen to be invariant not just under the $G$ subgroup of $\mathbb D$ --- which would be the diagonal $G$ for $\mathbb D=G\times G$ --- but also under the full group $\mathbb D$; we commit to such a choice of $\mathcal E$ in this subsection. 

Hitting the localisation equation \eqref{Alocequation} with $\star\mathcal D\star$ leads to
\be
\label{eq:thirdwaySigmaPDE}
{\star}\mathcal D{\star}\mathcal D\bbsigma=\frac{g^2}{k}{\star}\mathcal D M{\star}\mathbb A\,,
\ee
where again $\mathcal D= \dr + \ad\mathbb A$. We will solve this for $\bbsigma$ in terms of $\mathbb A$ up to zeromodes, and establish that the space of zeromodes is finite-dimensional. (This is nontrivial because $\mathbb A$ is not a flat connection.) The key to this is establishing that $$\Delta_{\mathbb A}\equiv{\star}\mathcal D{\star}\mathcal D:C^\infty( M_3)\otimes \frak d\to C^\infty( M_3)\otimes \frak d$$ is both elliptic and selfadjoint; then one may use standard theorems (e.g.~Theorem 4.12 of \cite{wells2007differential}) to get a handle on the space of zeromodes and the existence of Green's functions.

Ellipticity is in fact trivial: the principal symbol of $\Delta_{\mathbb A}$ is independent of $\mathbb A$, and for $\mathbb A=0$ this operator reduces to the Laplacian. For self-adjointness we invoke $\mathcal E:\frak d\to \frak d$ as above and define an $L^2$ inner product
\be
\langle\bbsigma,\bbtau\rangle\equiv\int_{S_3} \mathcal E(\bbsigma,{\star}\bbtau)
\ee
such that  an operator $\mathcal O$ is selfadjoint if and only if
\be
\langle \mathcal O\bbsigma,\bbtau\rangle=\langle\bbsigma,\mathcal O \bbtau\rangle\,.
\ee
If $\mathcal D$ satisfies a standard integration by parts identity inside the integral, then $\Delta_{\mathbb A}$ is selfadjoint by a small calculation. That identity is indeed satisfied whenever $\mathcal E$ is $\ad\frak d$-invariant, which is true by construction of $\mathcal E$.

Since $\Delta_{\mathbb A}$ is elliptic and selfadjoint, Theorem 4.12 of \cite{wells2007differential} implies a Hodge decomposition: 
\begin{enumerate}
    \item there exists a projector $\Pi_H$ onto $H=\ker \Delta_{\mathbb A}$ (``harmonic scalars'') and an operator $G$ (``Green's function'') such that
    \be
    G \Delta_{\mathbb A}+\Pi_H=\Delta_{\mathbb A} G+\Pi_H=1
    \ee
    with $1$ being the identity on $C^\infty( M_3)\otimes \frak d$.
    \item $H$ and $\im(G\Delta_{\mathbb A})$ are orthogonal with respect to the $L^2$ inner product;
    \item
    the dimension of $H$ is finite.
\end{enumerate}
For us $H$ is the space of $\bbsigma$ zeromodes and $N_{\mathbb A}\equiv \im(G\Delta_{\mathbb A})$ is the space of non-zero modes. To solve \eqref{eq:thirdwaySigmaPDE} using the Green's function we thus need to confirm that the source term $\frac{g^2}{k}{\star}\mathcal D M{\star}\mathbb A$ actually lies in $N_{\bbA}$. Indeed it is trivial to check that it is orthogonal to any zeromode $\bbsigma_0$:
\be
\int_{S^3}\mathcal E({\star}\mathcal DM{\star}\mathbb A,{\star}\bbsigma_0)=\int_{S^3}\mathcal E(\bbsigma_0,\mathcal DM{\star}\mathbb A)=0
\ee
where the last equality uses integration by parts and the implication $\Delta_{\mathbb A}\bbsigma_0=0\iff \mathcal D\bbsigma_0=0$.

Therefore we have established that solutions $\bbsigma$ to \eqref{eq:thirdwaySigmaPDE} take the form $\bbsigma=\bbsigma_0+\bbsigma'$ where $\bbsigma_0$ lies in a finite-dimensional space of zeromodes and the non-zero mode $\bbsigma'$ is uniquely determined in terms of $\mathbb A$, $M$, and the Green's function $G$ (which also depends on $\mathbb A$) as
\be
\label{eq:bbsigmaprimeThirdWaySol}
\bbsigma'=\frac{g^2}{k} G {\star}\mathcal D {\star} M\bbA\,.
\ee
($\bbsigma'$ here not to be confused with the fluctuation field from the previous subsection.) 

\paragraph{The space of zeromodes and the localised path integral.} The zeromodes $\bbsigma_0$ at fixed $\mathbb A$,
\be
\mathcal D\bbsigma_0=\dr\bbsigma_0+[\mathbb A,\bbsigma_0]=0\,,
\ee
form the space of $\fd$-valued infinitesimal gauge transformations that leave $\mathbb A$ invariant. This is the tangent space to the \emph{stabiliser} of $\mathbb A$, namely the group of finite gauge transformations leaving $\mathbb A$ invariant,
\be
S_{\mathbb A}\equiv \big\{\mathbbm g:S^3\to \mathbb D \quad | \quad\mathbb A= -\dr \mathbbm g\mathbbm g^{-1} + \mathbbm g \mathbb A \mathbbm g^{-1}\big\}\,,
\ee
where $\mathbb D=G\times G$ indicates the group integrating $\fd$.

Although $\mathbb D$ gauge transformations are not a symmetry of Manin  theory ---  indeed we have been emphasising that $\mathbb A$ is the data of a gauge field for $G$ along with a matter field ---  we may use gauge theory results to study the zeromodes for this specific Manin theory\footnote{Note that when $\mathbb D$ is simply-connected, since $S^3$ is 3-dimensional all principal bundles atop $S^3$ with fibre $\mathbb D$ are trivial, so there are no subtleties in identifying $\mathbb A$ with a $\mathbb D$-connection for the purposes of the argument in this paragraph. $\mathbb D$ may be arranged to be simply-connected if e.g.~we take $\mathbb D=G\times G$ for simply connected $G$.}, which are conveniently collected for us in reference \cite{Fuchs:1994zv}. It is known in particular that $S_A$ may be identified with the centraliser of the holonomy group of $\mathbb A$, which is a subgroup of $\mathbb D$. Therefore the dimension of the space of zeromodes $N_{\mathbb A}$, which is the dimension $\dim S_{\mathbb A}$, is bounded above by $2\dim G$ and depends on the specific $\mathbb A$. 

We return to the localisation equation \eqref{Alocequation}, ${\star}\mathbb F-\mathcal D\bbsigma +\frac{g^2}{k} M\mathbb A=0$. Since it is only the nonzero modes of $\bbsigma$ that appear therein, we may eliminate $\bbsigma$ in favour of $\mathbb A$ using the Green's function $G$ and rewrite \eqref{Alocequation} as
\be
\label{AlocequationThirdWay}
{\star}\mathbb F +\frac{g^2}{k} M\mathbb A=\mathbb J_{\mathbb A}
\ee
for a source term $\mathbb J_{\mathbb A}\equiv \mathcal D\bbsigma=(g^2/k)\mathcal DG {\star}\mathcal D{\star} M\mathbb A$ that depends on $\mathbb A$ nonlinearly (via $M\mathbb A$, $\mathcal D$ and $G$).

Upon using the above results, the localised path integral \eqref{eq:localisedpartitionfunction} for the partition function $Z$ becomes
\be
\label{eq:localisedThirdWayPartitionFunction}
Z= \int_{\mathcal M_\text{Third Way}}\mathrm D\bbA\,\delta(\nabla\cdot  A)\int_{N_{\bbA}}  \mathrm d\bbsigma_0\, Z_\text{1-loop} \exp(\bar S_{N=2})
\ee
This is an integral over the \emph{a priori infinite-dimensional} moduli space $\mathcal M_\text{Third Way}$  of solutions $\bbA$ to  equation \eqref{AlocequationThirdWay}, 
along with an integral over the finite-dimensional space of zeromodes $\bbsigma_0\in N_{\bbA}$ associated to each such solution. Note that $\mathcal M_\text{Third Way}$ is non-empty: it includes e.g.~$\mathbb A=0$ which trivially solves \eqref{AlocequationThirdWay}.

Therefore the path integral of the Third Way theory localises to the classical equations of motion with a source $\mathbb J_{\mathbb A}$ generated by self-interactions, alongside a finite integral over zeromodes $\bbsigma_0$.

\subsubsection{Yang-Mills theory}
As explained in section \ref{Yang-Mills as a Manin theory}, this is the case where the double $\fd$ is the semidirect sum of $\fg$ and its coadjoint representation: $\fd=\fg\ltimes \fg^\star$. We expand everything in the basis $\{T_a,\tilde T^a\}$ of \eqref{eq:YMdoublealgebra} exhibiting this semidirect sum structure  immediately:
\be
\mathcal D\bbsigma=D_A\sigma  + D_A\tilde \sigma +[\tilde A,\sigma]\,,\quad \mathbb F= F+D_A\tilde A\,,
\ee
with $F\equiv\dr A + A^2$, and $D_A\equiv \dr +[A,\bullet]$ the covariant derivative, where the bracket is in $\fd$. Since $[T,\tilde T]\propto \tilde T$ and the bracket restricted to $\fg$ closes, all quantities with tildes lie in $\tilde{\fg}\equiv \fg^\star$. The localisation equation \eqref{Alocequation} splits into two components, each valued in $\fg$ and $\tilde{\fg}$ respectively:
\begin{align}
\label{AlocYM7}
\frac{g^2}{k}M\tilde A&=D_A\sigma-{\star}F\,,\\
\label{AlocYM8}
{\star}D_A\tilde A&=D_A\tilde \sigma +[\tilde A,\sigma]\,.
\end{align}
These two equations are respectively the $\tilde A$ equation of motion of the first-order Yang-Mills action \eqref{first order YM} sourced by $D_A\sigma$, and the $A$ equation of motion of the same sourced by $D_A\tilde \sigma +[\tilde A,\sigma]$.

Since $M\tilde A=M^{ab}\tilde A_b T_a$ with $M^{ab}$  nondegenerate (see discussion around \eqref{eq:Midentities}), \eqref{AlocYM7} may be solved for $\tilde A$. If we also simplify the notation using $M^{ab}$ to identify the coadjoint representation $\fg^\star=\tilde {\fg}$ with $\fg$ everywhere,  \eqref{AlocYM8} becomes
\be
\label{AlocYMFinal}
{\star} D_A {\star }F= 2{\star}[F,\sigma]-\Big( \frac{g^2}{k} D_A\tilde\sigma +[D_A\sigma,\sigma]\Big)
\ee
which is the second-order Yang-Mills equation of motion with sources. Note that these are not the equations of motion one obtains from $S_{N=2}$ \eqref{eq:SUSYManinLagrangian} with $\tilde A$ backsubstituted in, because in that case $S_{N=2}$ takes the form
\be
\frac{k^2}{g^2}\int\kappa\Big(\frac{1}{2} D_A\sigma\star D_A\sigma -\frac{1}{2} F\star F\Big)-\Big(\frac{g^2}{2} +\frac{k}{\ell}\Big)\int \kappa\Big(\frac{1}{2}\tilde \sigma\star\tilde\sigma\Big) +\text{(fermions)}
\ee

We can solve equation \eqref{AlocYMFinal} for the nonzero modes of $\tilde \sigma$ using a propagator, as was done in the previous subsection. Thus we split $\tilde \sigma=\tilde\sigma_0+\tilde\sigma'(A,\sigma)$ with $\tilde\sigma_0$ satisfying $D_A\tilde\sigma_0=0$. The upshot is the following formula for the Yang-Mills partition function \eqref{partitionfunction}:
\be
\label{eq:localisedYMManinPartitionFunction}
Z= \int_{\mathcal M}\mathrm DA\,\mathrm D\sigma\,\delta(\nabla\cdot A)\,\int_{N_{A,\sigma}}\mathrm d\tilde\sigma_0\, Z_\text{1-loop} \exp{\bar S_{N=2}}\,,
\ee
where now $\mathcal M_\text{Yang-Mills}$ is the moduli space of solutions to \eqref{AlocYMFinal}, while $Z_\text{1-loop}$ is the usual localisation-induced 1-loop determinant. 

In contrast to the Third Way case, and, perhaps, predictably, we have not replaced the usual $N=0$ path integral with something that is obviously simpler: we have  integrals over $A$ and $\sigma$, both of which are a priori infinite-dimensional even though these fields are related via \eqref{AlocYMFinal}. We note however that this result does imply that the path integral is 1-loop exact for abelian $G$: using a Hodge decomposition, formula \eqref{AlocYMFinal} implies immediately that both $F$ and $\tilde \sigma$ are harmonic, so the integral has localised onto $\dr F=\dr {\star} F=0$; the other fields contribute normalisation factors.

\section{Discussion}
For the convenience of the reader we recapitulate the main results of this paper as well as the specific technical conditions invoked:
\begin{enumerate}
    \item For any Manin gauge theory where the gauge algebra $\fg$ is associated to a compact gauge group $G$, and assuming that there are no BRST or supersymmetry anomalies, expectation values of $Q$-invariant  observables (see \eqref{eq:Qdef}) on the round 3-sphere $S^3$ localise onto \eqref{localisationequations}. The resulting path integral takes the form \eqref{eq:localisedpartitionfunction} (for the partition function) where the integral measure acquires the 1-loop determinant factor \eqref{eq:1loopdeterminant}. (These results were obtained when the Lie quasibialgebra associated to the Manin pair has $\tilde f=0$, which may be arranged to be the case for the large class of Lie quasibialgebras of coboundary type \eqref{eq:coboundary}; however, this restriction can be lifted.)
    \item We proved the Hamiltonian \eqref{eq:Maninhamiltonian} is positive-definite in Lorentzian signature (in nice enough coordinate systems).
    \item We also showed that Manin gauge theories whose Lie quasibialgebras are of coboundary type --- including the case $\tilde f=0$ --- are all parity-invariant and reflection-positive.
    \item For the Manin theory formulations of Yang-Mills and Third Way theories (which fulfil all assumptions used above) we further analysed the localisation locus, yielding the path integrals \eqref{eq:localisedYMManinPartitionFunction} and \eqref{eq:localisedThirdWayPartitionFunction} respectively.
\end{enumerate}

One should in principle check carefully for anomalies in the realisation of evanescent supersymmetry in Manin theory, since it differs by terms depending on the mass matrix $M$ from conventional 3D $N=2$ supersymmetry. In fact such anomalies could also afflict conventional localisation calculations, as has been pointed out multiple times in the literature (see e.g.~\cite{Katsianis:2019hhg} and references therein). Although it is reassuring that R-symmetry anomalies are forbidden on dimensionality grounds, the current absence of a superspace formulation, as well as the fact that the complete algebra of evanescent supersymmetries is unknown, signals, perhaps, a need to revisit and generalise the literature on supersymmetry anomalies.

It is worth discussing our results on localised partition functions for Yang-Mills and Third Way theories. Even though the Third Way theory is a  deformation of Yang-Mills theory, the localised path integrals look qualitatively different. In the Yang-Mills case, the locus \eqref{AlocYMFinal} looks like a nonlinear duality relation between the gauge connection $A$ and the scalar field $\sigma$ that appear in the localised path integral, while in the Third Way case the localised path integral is over the original fields $A,\tilde A$, constained by the Third Way equations of motion with a nonlocal self-interaction term \eqref{AlocequationThirdWay} (along with a finite-dimensional path integral, which we ignore). 

In other words, the Third Way theory is unreasonably close to being 1-loop exact! Clearly, more work ought to be done to extract explicit results from Third Way path integrals of the form \eqref{eq:localisedThirdWayPartitionFunction}, including a calculation of the 1-loop determinant therein. We hope our work paves a new viable path forward for non-perturbative calculations in  gauge theory.

\section*{Acknowledgements}
ASA is happy to acknowledge insightful interactions with Victor Lekeu, Ingmar Saberi, Lewis Cole, Christoph Uhlemann, Daniel Thompson, Saskia Demulder, Matt Roberts, and Neil Lambert. He is also grateful to the Fundamental Physics group at Chulalongkorn University, Thailand, for hospitality. He is supported by the FWO-Vlaanderen through the project G006119N, as well as by the Vrije Universiteit Brussel through the Strategic Research Program “High-Energy Physics”, and by an FWO Senior Postdoctoral Fellowship.
Likewise, DK is happy to acknowledge useful interactions with Alexandre Sevrin, Ben Craps, Alberto Mariotti, Leron Borsten and especially ASA himself with regards to this project and the master thesis leading up to it.


\appendix
\section{Notation and conventions}
Some notation:
\begin{itemize}
     \item $\zeta^\alpha,\hat \zeta_\alpha$ for the two bosonic spinors for each SUSY, \emph{with this index placement by default}. We may also use $\eta,\hat \eta$ for the same.
     \item Spinor bilinears are always written $\zeta\hat\zeta$ and $\zeta \gamma_\mu\hat\zeta$, both of which obey NorthWest/SouthEast conventions for the spinor indices. If we need to switch the roles, we will write $_{\mathcal C}$, so for example $\zeta\hat\zeta=-\hat\zeta_{\mathcal C}\zeta_{\mathcal C}$ when both are bosonic.
     \item $\mu\nu\rho\cdots$ as curved $M_3$ indices
     \item $mnpqr\cdots$ as flat $M_3$ indices
     \item $\gamma^{\mu\nu}=\frac{1}{2}\gamma^{[\mu}\gamma^{\nu]}$, similarly for $\gamma^{mn}$
     \item $M_\alpha{}^\beta$ is the index placement for any $2\times 2$ matrix $M$ acting on spinors. This includes $(\zeta\otimes \hat\epsilon)_\alpha{}^\beta=\zeta_\alpha \hat\epsilon^\beta$.
     \item For $\frak g$ generators, write $T_a\,,\quad a=1,2,\cdots \dim g$, and commutation relations $[T_a,T_b]=f_{ab}{}^c T_c$. The structure constants $f_{ab}{}^c$ are {\bf real} in our convention, so that $T_a$ are all represented by antihermitian matrices for compact $\frak g$.
          \item In the context of a Manin pair $(\frak d,\frak g,\eta)$, we let the set $\{T_a,\tilde T^b\}$ generate $\frak d$ and impose $\eta(T_a,\tilde T^b)=\delta^a_b$ as the only nonvanishing $\eta$ matrix element up to symmetry, where $\{T_a\}$ have the same commutation relations in $\frak{d}$ as they do in $\frak g$.

     For $\tilde{\frak g}$, write $\tilde T^a\,,\quad a=1,2,\cdots \dim g$, and commutation relations $[\tilde T^a,\tilde T^b]=\tilde f^{ab}{}_c \tilde T^c+\tilde h^{abc}T_c$ ($g^{abc}$ vanishes when $\tilde {\frak g}$ is a Lie algebra).

     We have $\frak{d}=\frak{g}\oplus\tilde{\frak{g}}$ as vector spaces (\emph{not} as Lie algebras) in the Manin pair/triple case.
     \item $P$ is the projector onto $\frak g$ in the situation just above; $\tilde P=(1-P)$ is the projector onto $\tilde{\frak g}$.
     \item For a generic basis of $\frak d$  we may use $\mathbb T_A$ for the generators. Similarly, the matrix coefficients of $\eta$ are $\eta(\mathbb T_A,\mathbb T_B)=\eta_{AB}$.
     \item We write $x,\tilde x,\mathbbm x$ for elements of $\frak{g},\tilde{\frak g},\frak d$.
     \item We write $A$ for a $\frak{g}$ gauge field, $\mathbb A$ for a $\frak{d}$ gauge field, $\tilde A$ for a $\tilde{\frak g}$ gauge field or else for the components of $\mathbb A$ valued in $\tilde{\frak g}$ in the Manin pair case.
     \item Derivatives --- such are the exterior derivative $\dr$ --- act from the \textbf{left}: $\dr(ab)=\dr a b + (-1)^a a \dr b$. Variations, however, act from the \textbf{right}. (This is to accord with \cite{kapustin2010exact} and to eliminate some minus signs when obtaining equations of motion and the like.)
     \item We write $\mathcal D\equiv \dr + [\mathbb A,\bullet]$ for the covariant derivative associated to $\mathbb A$, and $D_A\equiv \dr +[A,\bullet]$ for the covariant derivative associated to $A$.
     \item We write $\bbA,\bbsigma,\bblambda,\hat\bblambda,\mathbb D$ for the $N=2$ $\frak d$-valued multiplet, and $A,\sigma,\lambda,\hat\lambda, D$ for the $N=2$ $\frak g$-valued multiplet. The $\sigma,D$ are bosonic scalars and $\lambda,\hat\lambda$ are fermionic spinors (resp.~for the doublestruck versions).
     \item $M$ for the linear map $M:\frak{d}\to\frak{d}$ defining the Manin pair mass term which has $M\frak{g}=0$; $\tilde M$ for the ``dual one'' with $\tilde M\tilde{\frak g}=0$. The nonvanishing components of $M$ in the basis $\{T_a,\tilde T^a\}$ are $M^{ab}$.
\end{itemize}

\subsection{Differential form conventions for possibly fermion-valued forms}
Here we assume any signature of the metric along with any Grassmann parity of the forms: namely, given we define the components of a $p$-form by
\be
\label{eq:formcpts}
\alpha\equiv \frac{1}{p!} \dr x^{\mu_1}\cdots \dr x^{\mu_p}\alpha_{\mu_1\cdots \mu_p}
\ee
we may allow $\alpha_{\mu_1\cdots \mu_p}$ to be grassmann-odd. This implies that the components depend on the ordering, above. {\bf We assume this ordering is the canoncal ordering.} Furthermore we may define another set of coefficients $\bar\alpha_{\mu_1\cdots \mu_p} $ via
\be
\label{eq:barredformcpts}
\alpha\equiv \frac{1}{p!} \bar\alpha_{\mu_1\cdots \mu_p} \dr x^{\mu_1}\cdots \dr x^{\mu_p}\,,
\ee
which for consistency satisfy (where $F(\alpha)$ is the fermion number which is 1 when $\alpha_{\mu_1\cdots \mu_p}$ is fermionic)
\be
\bar\alpha_{\mu_1\cdots \mu_p}=(-1)^{p}F(\alpha)\alpha_{\mu_1\cdots \mu_p}\,.
\ee

We may then define a $C^\infty$-bilinear form $\langle\bullet,\bullet\rangle_p$ mapping into (possibly fermionic-valued) scalars
\be
\langle\alpha,\beta\rangle_p\equiv \frac{1}{p!} \bar\alpha_{\mu_1\cdots \mu_p}\beta^{\mu_1\cdots \mu_p}
\ee
where the index gymnastics are via any metric. If we declare
\be
\boxed{T(\alpha)\equiv F(\alpha)+ p_\alpha
}\ee
to be the total degree of a possibly fermionic $p$-form $\alpha$, we then calculate 
\be
\langle\alpha,\beta\rangle_p=(-1)^{{T(\alpha)T(\beta)+p}}\langle\beta,\alpha\rangle_p
\ee
whence this form is graded-(anti)symmetric depending on the value of $p$. We may then define the Hodge star ${\star}_p$ (where we retain the subscript $p$ for clarity for now) via
\be
\alpha{\star}_p\beta\equiv (-1)^{d F(\beta)}\langle \alpha,\beta\rangle_p {\star}_01
\ee
which entails
\be
\alpha{\star}_p\beta=(-1)^{T(\alpha)T(\beta)+d(T(\alpha)+T(\beta))+p}\beta{\star}_p\alpha=(-1)^{(T(\alpha)+d)(T(\beta)+d)+p{+d}}\beta{\star}_p\alpha\,.
\ee

Note that the definition of the star above is consistent in the sense that it is well-defined as a map of $p$-forms $\beta$ to $d-p$-forms $\star \beta$ (if implicitly so). To determine this map explicitly we parameterise
\be
{\star}_p\beta\equiv \frac{\dr x^{\nu_1}\cdots \dr x^{\nu_{d-p}}}{(d-p)!} ({\star}_p)^{\mu_1\cdots \mu_p}{}_{\nu_1\cdots \nu_{d-p}} \frac{1}{p!}\beta_{\mu_1\cdots \mu_p}
\ee
for tensors $({\star}_p)^{\mu_1\cdots \mu_p}{}_{\nu_1\cdots \nu_{d-p}}$ to be determined.

Then we calculate $\alpha{\star}_p\beta$ for which it is convenient to use $\bar \alpha_{\mu\nu\cdots}$ immediately. We get
\begin{align}
\alpha{\star}_p\beta=&\frac{1}{p!} \bar\alpha_{\mu_1\cdots \mu_p} \dr x^{\mu_1}\cdots \dr x^{\mu_p}   \frac{\dr x^{\nu_1}\cdots \dr x^{\nu_{d-p}}}{(d-p)!} ({\star}_p)^{\rho_1\cdots \rho_p}{}_{\nu_1\cdots \nu_{d-p}} \frac{1}{p!}\beta_{\rho_1\cdots \rho_p} \\
&=\frac{1}{p!} \bar\alpha_{\mu_1\cdots \mu_p} (\dr x)^d \varepsilon^{\mu_1\mu_2\cdots \mu_p\nu_1\cdots\nu_{d-p}} \frac{1}{(d-p)!} ({\star}_p)^{\rho_1\cdots \rho_p}{}_{\nu_1\cdots \nu_{d-p}} \frac{1}{p!}\beta_{\rho_1\cdots \rho_p} \\
&=(-1)^{dF(\beta)}\frac{1}{p!} \bar\alpha_{\mu_1\cdots \mu_p} \frac{1}{(d-p)!}  \frac{1}{p!}\beta_{\rho_1\cdots \rho_p} (\dr x)^d \varepsilon^{\mu_1\mu_2\cdots \mu_p\nu_1\cdots\nu_{d-p}} ({\star}_p)^{\rho_1\cdots \rho_p}{}_{\nu_1\cdots \nu_{d-p}}
\end{align}

Since (where the epsilons are the $SL$-invariant tensor densities whose values are $\pm 1$)
\be
\varepsilon^{\mu_1\cdots\mu_p\nu_{1}\cdots\nu_{d-p}}\varepsilon_{\rho_1\cdots\rho_p\nu_{1}\cdots\nu_{d-p}}=p!(d-p)! \delta^{\nu_1}_{[\rho_1}\cdots \delta^{\nu_p}_{\rho_p]}
\ee
and by definition
\be
{\star_0}1=\frac{1}{d!}\dr x^{\mu_1}\cdots \dr x^{\mu_d} {\star_0}_{\mu_1\cdots \mu_d}=\underbrace{\dr x^1\dr x^2\cdots \dr x^d}_{(\dr x)^d} ({\star_0})_{12\cdots d}\,,
\ee
if we enforce $\alpha{\star}_p\beta=(-1)^{d F(\beta)}\langle \alpha,\beta\rangle_p {\star}_01$ we obtain
\be
({\star}_p)^{\rho_1\cdots \rho_p}{}_{\nu_1\cdots \nu_{d-p}}=(\star_0)
_{\sigma_1\cdots \sigma_p\nu_1\cdots\nu_{d-p}} g^{\rho_1\sigma_1}\cdots g^{\rho_p\sigma_p}
\ee
or, more compactly
\be
{\star}_p\beta\equiv \frac{\dr x^{\nu_1}\cdots \dr x^{\nu_{d-p}}}{(d-p)!} ({\star}_0)_{\sigma_1\cdots \sigma_p\nu_1\cdots\nu_{d-p}} \frac{1}{p!}\beta^{\sigma_1\cdots \sigma_p}
\ee
where the indices on $\beta$ are raised with the metric $g$. 

For the square of the Hodge star we need the identity
\be
\det g^{-1}=\frac{1}{d!} \varepsilon_{\mu_1\cdots \mu_d}\varepsilon_{\nu_1\cdots \nu_d} g^{\mu_1\nu_1}\cdots g^{\mu_d\nu_d}
\ee
whence the index-raised $\star_0$ has
\be
{\star_0}^{\mu_1\cdots \mu_d}=\det g^{-1}((\star_0)_{12\cdots d})\varepsilon^{\mu_1\cdots \mu_d}
\ee
and then the last formula for the Hodge star yields
\be
({\star_{d-p}}{\star_p}\beta)_{\mu_1\cdots \mu_p}=\det g^{-1} ((\star_0)_{12\cdots d})^2 (-1)^{p(d-p)} \beta_{\mu_1\cdots \mu_p}
\ee
Thus
\be
(\star_0)_{12\cdots d}\equiv \sqrt{|\det g|}\implies \boxed{\star^2=(-1)^{p(d-p)}\operatorname{sign}\det g}
\ee
which is exactly the same formula as for bosonic-valued forms. 
\bigskip

In summary,
\be
\boxed{{\star}_p\beta\equiv \frac{\dr x^{\nu_1}\cdots \dr x^{\nu_{d-p}}}{(d-p)!}  \sqrt{|\det g|}\varepsilon_{\sigma_1\cdots \sigma_p\nu_1\cdots\nu_{d-p}} \frac{1}{p!}\beta^{\sigma_1\cdots \sigma_p}
}\ee
\be
\boxed{\alpha{\star}_p\beta\equiv (-1)^{d F(\beta)}\langle \alpha,\beta\rangle_p {\star}1
}\ee
\be
\boxed{\langle\alpha,\beta\rangle_p\equiv \frac{1}{p!} \bar\alpha_{\mu_1\cdots \mu_p}\beta^{\mu_1\cdots \mu_p}
}\ee
with the following signs under permutations (of $p$-forms)
\be
\boxed{\alpha{\star}_p\beta=(-1)^{(T(\alpha)+d)(T(\beta)+d)+p}\beta{\star}_p\alpha\,,
}\ee
\be
\boxed{
\langle\alpha,\beta\rangle_p=(-1)^{T(\alpha)T(\beta)+p}\langle\beta,\alpha\rangle_p\,.
}\ee

\subsubsection{Integration}

We define the integral of a $d$-form $\alpha$ in a local coordinate patch of some $d$-fold as the map $\alpha\to\int\alpha$
\be
\int \alpha=\int \dr x^1\dr x^2\cdots \dr x^d \;\frac{\varepsilon^{\mu_1\mu_2\cdots\mu_d}}{d!}\alpha_{\mu_1\mu_2\cdots\mu_d}
\ee
where the components of the form are defined via \eqref{eq:formcpts}. (The domain of the integral has been omitted.) The ordering in that formula is important in the case where the components are fermionic-valued ($F(\alpha)=1$). This expression is manifestly invariant under orientation-preserving diffeomorphisms (since $\varepsilon$ is).

If $\delta$ is a \emph{right} graded differential, as in the main text, then
\be
\delta \alpha=\frac{1}{d!}\dr x^{\mu_1} \cdots \dr x^{\mu_d} \delta \alpha_{\mu_1\mu_2\cdots\mu_d}
\ee
and thus there are no funny signs in $\delta \int\alpha=\int \delta\alpha$. \emph{However}, (now for a $p$-form)
\be
\label{nastybarredvariationsigns}
(\overline{\delta \alpha})_{\mu_1\cdots \mu_p}=(-1)^{p\delta} \delta \bar \alpha_{\mu_1\cdots \mu_p}=(-1)^{{p(\delta+F(\alpha))}}\delta \alpha_{\mu_1\cdots\mu_p}\,,
\ee
where $\delta$ in the exponent is the total parity of $\delta$.

Moreover since 
\be
(c\alpha)_{\mu_1\cdots \mu_d}=(-1)^{d F(c)}c \,\alpha_{\mu_1\cdots\mu_d}\,,\qquad (\alpha c)_{\mu_1\cdots \mu_d}= \alpha_{\mu_1\cdots \mu_d} c
\ee
(the signs are reversed for the barred components), whenever $c$ is constant we have
\be
\int c\alpha=(-1)^{d F(c)}c\int \alpha
\ee
but
\be
\int\alpha c=\Big(\int \alpha \Big)c\,.
\ee
Put briefly, we may pull things out of integrals from the right.

Finally for the variation of terms like $\alpha{\star}\beta$ where $\alpha,\beta$ are $p$-forms, we calculate $\alpha\star\beta=(-1)^{d F(\alpha)} {\star} 1 \, \langle \alpha,\beta\rangle$ whence via \eqref{nastybarredvariationsigns} (for $\delta$ a right derivative again)
\be
\delta(\alpha{\star}\beta)=\alpha{\star \delta \beta}+(-1)^{d F(\alpha) + F(\beta )\delta} {\star} 1 \, (p!)^{-1} \delta \bar \alpha_{\mu_1\cdots \mu_p}\beta^{\mu_1\cdots \mu_p}
\ee
which gives
\be
\delta(\alpha{\star}\beta)=\alpha{\star \delta \beta}+(-1)^{ (p+d+F(\beta))\delta }\delta\alpha{\star}\beta\,,
\ee
or equivalently (in terms of the total degree)
\be
\delta(\alpha{\star}\beta)=\alpha{\star \delta \beta}+(-1)^{T({\star}\beta)\delta }\delta\alpha{\star}\beta\,,.
\ee

\subsection{Spinors and Gamma Matrices}
\subsubsection{Gamma matrices}
In this part of the appendix we list our conventions regarding the gamma matrices $\gamma^m = (\gamma^m{_\alpha}{^\beta})$, as well as some useful identities. We choose the represent the gamma matrices simply by the Pauli matrices,
\begin{align}
\gamma^1 &=
\begin{pmatrix}
0 & 1\\
1 & 0
\end{pmatrix}
&
\gamma^2 &=
\begin{pmatrix}
0 & -i\\
i & 0
\end{pmatrix}
&
\gamma^3 &=
\begin{pmatrix}
1 & 0\\
0 & -1
\end{pmatrix},
\end{align}
though, for most of our discussion the specific representation isn't relevant. However, we will assume they are Hermitian, i.e. $(\gamma^m)^\dagger = \gamma^m$. It follows that they satisfy the algebra
\begin{equation}
\gamma^m\gamma^n = \delta^{mn}\idop + i\varepsilon^{mn\ell}\gamma_\ell.
\end{equation}
with $\idop = (\delta_\alpha{^\beta})$ denoting the $2\times 2$ unit matrix in spinor space. 
\subsubsection{Spinors}
Spinors will either be vectors $\hzeta = (\hzeta_\alpha)$ or their conjugates $\zeta = (\zeta^\alpha)$ of the fundamental representation of $SU(2)$. As such, under local Lorentz transformations these transform as
\begin{subequations}
\begin{align}
\delta_\lambda\hzeta 
&=
+\frac{1}{4}\lambda^{mn}\gamma_{mn}\hzeta
=
+\frac{i}{4}\varepsilon_{mn\ell}\lambda^{mn}\gamma^\ell\hzeta
\\
\delta_\lambda\zeta 
&=
-\frac{1}{4}\lambda^{mn}\zeta\gamma_{mn}
=
-\frac{i}{4}\varepsilon_{mn\ell}\lambda^{mn}\zeta\gamma^\ell
\end{align}
\end{subequations}
since Lorentz generators are represented as $\rho(M_{mn}) = \frac{1}{2}\gamma_{mn}$. One should note that Hermitian conjugation maps between vectors and their conjugates. Therefore, we write
\begin{subequations}
\begin{align}
\zeta^\dagger &= (\zeta^\dagger{_\alpha}) = ((\zeta^\alpha)^\ast)
\\
\hzeta{^\dagger} &= (\hzeta^{\dagger\alpha}) = ((\hzeta_\alpha)^\ast)
\end{align}
\end{subequations}
not to be confused with the antisymmetric Northwest-Southeast contraction by the charge conjugation matrix in the following:
\subsubsection{Charge conjugation}
Now, to move on to charge conjugation: we define the charge conjugation matrix $\Cj$ to be
\begin{align}
\Cj &= (\varepsilon^{\alpha\beta}) =
\begin{pmatrix}
0 & 1\\
-1 & 0
\end{pmatrix}
=
-i\gamma^2
&
&\Rightarrow
&
\Cj^{-1} &= (-\varepsilon_{\alpha\beta}) = -\Cj = \Cj^t
\end{align}
We use $\varepsilon_{\alpha\beta}$ and $\varepsilon^{\alpha\beta}$ to raise and lower spinorial indices in the Northwest-Southeast convention, i.e.
\begin{subequations}
\begin{align}
\hzeta_\Cj &= (\Cj\hzeta)^t{\phantom{^{-1}}} = (\varepsilon^{\alpha\beta}\hzeta_\beta) =: (\hzeta{^\alpha})
\\
\zeta_\Cj &= \Cj^{-1}\zeta^t = (\zeta^\beta\varepsilon_{\beta\alpha}) =: (\zeta_\alpha)
\end{align}
\end{subequations}
$t$ denoting the transpose, as per the column/row vector interpretation. Following this convention it further also follows that
\begin{subequations}
\begin{gather}
\delta_\alpha{^\beta} = \varepsilon_\alpha{^\beta} = -\varepsilon^\beta{_\alpha}
\\
\gamma^{m\alpha\beta} = \gamma^{m\beta\alpha}
\\
\gamma^m{_{\alpha\beta}} = \gamma^m{_{\beta\alpha}}
\end{gather}
\end{subequations}
This affects spinor bilinears in the following way:
\begin{align}
\zeta\hzeta &= -\hzeta_\Cj\zeta_\Cj
&
\zeta\gamma^m\hzeta &= \hzeta_\Cj\gamma^m\zeta_\Cj
\end{align}
for bosonic spinors, with a straightforward fermionic generalisation.
\subsubsection{Fierz identity and more on spinor bilinears}
Finally, let us comment on the Fierz identity. The Fierz identity reads
\begin{equation}
M = \frac{1}{2}\idop\tr M + \frac{1}{2}\gamma^m\tr(\gamma_mM)
\end{equation}
for some spinor space matrix $M = (M_\alpha{^\beta})$. Particularly, for spinor bilinears $\hzeta\otimes\zeta = (\hzeta_\alpha\zeta^\beta)$ this reads
\begin{align}
\hzeta\otimes\zeta &= \frac{1}{2}(\zeta\hzeta)\idop + \frac{1}{2}\slashed{K}
&
K^m &\equiv \zeta\gamma^m\hzeta.
\end{align}
again assuming $\hzeta$, $\zeta$ bosonic. This leads to useful identities such as
\begin{subequations}
\begin{align}
\hzeta\otimes\zeta - \zeta_\Cj\otimes\hzeta_\Cj &= (\zeta\hzeta)\idop
\\
\hzeta\otimes\zeta + \zeta_\Cj\otimes\hzeta_\Cj &= \slashed{K}
\end{align}
\end{subequations}
Furthermore, noting that
\begin{subequations}
\begin{align}
\gamma^m\hzeta\otimes\zeta\phantom{\gamma^m} &= \frac{1}{2}K^m\idop + \frac{1}{2}(\zeta\hzeta)\gamma^m + \frac{i}{2}\varepsilon^{mn\ell}K_n\gamma_\ell
\\
\hzeta\otimes\zeta\gamma^m &= \frac{1}{2}K^m\idop + \frac{1}{2}(\zeta\hzeta)\gamma^m - \frac{i}{2}\varepsilon^{mn\ell}K_n\gamma_\ell
\end{align}
\end{subequations}
we arrive at even more identities
\begin{subequations}
\begin{gather}
\gamma^m\hzeta\otimes\zeta\phantom{\gamma^m} + \phantom{\gamma^m}\zeta_\Cj\otimes\hzeta_\Cj\gamma^m
= K^m\idop
\\
\phantom{\gamma^m}\hzeta\otimes\zeta\gamma^m + \gamma^m\zeta_\Cj\otimes\hzeta_\Cj\phantom{\gamma^m}
=
K^m\idop
\\
\vdots\nonumber
\end{gather}
\end{subequations}
and so on, by taking the appropriate linear combinations. The reason for listing these identities is that they make for very efficient tools in computing things such as the algebra $\{\delta_\zeta,\hat\delta_{\hat\zeta}\}$, or verifying the nilpotence of evanescent supersymmetry $\delta^2 = \hat\delta^2 = 0$.

\subsection{Geometry of the round 3-sphere and spinor derivatives}
\label{appendix:sphere}
We normalise the $\frak{su}(2)$ algebra of left-invariant vector fields on $S^3$ as
\begin{equation}
[X_m,X_n] = -\frac{2}{\ell}\varepsilon_{mn}{^p}X_p
\end{equation}
where $\ell$ is the 3-sphere radius, $\varepsilon^{123}=1$ is totally antisymmetric, and flat $abc\dots$ indices are lowered with $\delta_{mn}$. The dual basis of dreibeins $e^m$ then solves 
the Maurer-Cartan equation
$\extd e^m -\varepsilon^{mnp}e_ne_p/\ell = 0$, whence the torsion-free spin connection is
\begin{equation}
\omega^{mn} = \frac{1}{\ell}\varepsilon^{mnp}e_p\,.
\end{equation}
This enters into the expression for the spinorial derivatives 
\be
\nabla\zeta \equiv \dr \zeta -\zeta\frac{1}{4}\omega^{mn}\gamma_{mn}\,,\qquad \nabla\hat\zeta\equiv \dr \hat\zeta + \frac{1}{4}\omega^{mn}\gamma_{mn}\hat\zeta
\ee
which are covariant under $\delta\zeta=-\zeta\frac{1}{4}\lambda^{mn}\gamma_{mn}$, and leads to the formulas \eqref{eq:sphereSpinorD} used in the main text. For more details we refer to e.g.~\cite[section 6.3.1]{Fan_2019}.

\newpage
\let\oldbibliography\thebibliography
\renewcommand{\thebibliography}[1]{%
    \oldbibliography{#1}%
    \setlength{\itemsep}{-1pt}%
}
\begin{multicols}{2}
{\setstretch{0}
    \small
    \bibliography{bib.bib}}
\end{multicols}
\end{document}